\documentclass[
  superscriptaddress,
  reprint,
  amsmath,
  amssymb,
  aps,
  floatfix,
  pre
]{revtex4-2}

\usepackage{graphicx}
\usepackage{dcolumn}
\usepackage{bm}
\usepackage{xcolor}
\usepackage{microtype}
\usepackage{amsthm}
\usepackage{tikz}
\usepackage{pgfplots}
\newcommand{\Pbb}{\mathbb{P}}
\newcommand{\Ebb}{\mathbb{E}}
\newcommand{\Vbb}{\mathbb{V}}
\newcommand{\Fbb}{\mathbb{F}}
\newenvironment{proposition}[1][]{%
  \par\addvspace{6pt}\noindent\textit{#1.}\ %
}{%
  \par\addvspace{6pt}%
}
\pgfplotsset{compat=1.18}
\definecolor{plotblue1}{HTML}{6BAED6}
\definecolor{plotblue2}{HTML}{3182BD}
\definecolor{plotblue3}{HTML}{35617D}
\definecolor{plotred}{HTML}{D62728}
\definecolor{plotorange1}{HTML}{FDAE6B}
\definecolor{plotorange2}{HTML}{F16913}
\definecolor{plotorange3}{HTML}{A63603}
\definecolor{plotgreen1}{HTML}{74C476}
\definecolor{plotgreen2}{HTML}{31A354}
\definecolor{plotgreen3}{HTML}{006D2C}
\definecolor{plotpurple1}{HTML}{BCBDDC}
\definecolor{plotpurple2}{HTML}{807DBA}
\definecolor{plotpurple3}{HTML}{54278F}
\pgfplotsset{
  publication axis/.style={
    axis line style={black, line width=0.65pt},
    tick style={black, line width=0.55pt},
    tick align=inside,
    major tick length=3pt,
    minor tick length=1.5pt,
    label style={font=\small},
    tick label style={font=\small},
    title style={font=\normalsize},
    legend cell align={left},
    legend style={
      font=\scriptsize,
      draw=black!20,
      fill=white,
      fill opacity=0.90,
      text opacity=1,
      rounded corners=1pt,
      inner xsep=3pt,
      inner ysep=2pt,
      row sep=-0.5pt
    },
    ymajorgrids=true,
    grid style={black!8, line width=0.35pt},
    axis background/.style={fill=white}
  }
}
\makeatletter
\patchcmd{\@makecaption}{\small\rmfamily}{\footnotesize\rmfamily}{}{}
\patchcmd{\@makecaption}{\flushing}{\raggedright}{}{}
\patchcmd{\@makecaption}
  {\hb@xt@\hsize{\hfil\unhbox\@tempboxa\hfil}}
  {\hb@xt@\hsize{\unhbox\@tempboxa\hfil}}
  {}{}
\makeatother

\newcommand{\tikzfigure}[2]{%
  \resizebox{#1}{!}{#2}%
}
\newcommand{\figurepanel}[4]{%
  \begin{minipage}{#1}%
    \raggedright\small\textbf{(#2)} #3\par\smallskip
    \centering
    \tikzfigure{\linewidth}{#4}%
  \end{minipage}%
}

\makeatletter
\def\label#1{\@bsphack
  \begingroup
  \UseHookWithArguments{label}{1}{#1}%
  \protected@write\@auxout{}%
    {\string\newlabel{#1}{{\@currentlabel}{\thepage}%
      {\@currentlabelname}{\@currentHref}{\@kernel@reserved@label@data}}}%
  \endgroup
  \@esphack}
\makeatother
\usepackage{hyperref}

\makeatletter
\patchcmd{\@bibdataout@aps}{author="08"}{author="48"}{}{}
\patchcmd{\@bibdataout@aps}{author="08"}{author="48"}{}{}
\makeatother

\begin{document}

\title{Transcription rate dynamics and RNA copy number noise: \\ General relations and data-driven predictions}

\author{Amara McCune}
\email{amaramccune@nyu.edu}
\affiliation{Center for Soft Matter Research, Department of Physics, New York University, New York, NY 10003, USA}
\author{Ido Golding}
\affiliation{Department of Physics, University of Illinois at Urbana-Champaign, Urbana, IL 61801, USA}
\affiliation{Department of Microbiology, University of Illinois at Urbana-Champaign, Urbana, IL 61801, USA}
\author{Shlomi Reuveni}
\affiliation{School of Chemistry, Tel Aviv University, Tel Aviv 6997801, Israel}
\affiliation{The Raymond and Beverly Sackler Center for Computational Molecular and Materials Science, Tel Aviv University, Tel Aviv 6997801, Israel}
\affiliation{The Center for Physics and Chemistry of Living Systems, Tel Aviv University, Tel Aviv 6997801, Israel}
\author{Sarah Kostinski}
\email{sk10775@nyu.edu}
\affiliation{Center for Soft Matter Research, Department of Physics, New York University, New York, NY 10003, USA}

\date{\today}

\begin{abstract}
  Stochastic models of gene expression typically begin with a microscopic model of transcription and propagate its statistics to the RNA distribution. Here we develop a doubly stochastic framework in which RNA production is Poisson conditional on a time-varying transcription rate $\lambda(t)$. For RNA lifetimes drawn independently from an arbitrary finite-mean distribution, we derive the RNA Fano factor in terms of the lifetime survival function and the autocovariance of $\lambda(t)$. For a Poisson degradation process with rate $\mu$, the survival kernel provides an exponential temporal filter, and the result depends only on the mean, variance, and normalized autocorrelation of $\lambda(t)$. These quantities may be calculated from an explicit rate model or, under ergodicity and adequate sampling, estimated from a sufficiently long rate trajectory. We validate the data-driven estimator using simulated transcription-rate trajectories, without supplying the known autocorrelation to the estimator. We also obtain exact analytical results for transcription-rate dynamics modeled by a discrete $\mathrm{M/M/1}$ process and by drift--diffusion with reflecting, periodic, or first passage reset boundaries, and verify each result by direct simulation of the coupled transcription-rate and RNA copy-number processes.
\end{abstract}

\maketitle

\section{Introduction}
\label{sec:introduction}

Transcription, the production of RNA from a gene, is an inherently stochastic process~\cite{golding2024primer,raj2008nature,golding2005real,mcadams1997stochastic}. It involves many probabilistic encounters in the cell, such as the binding of transcriptional activators and repressors to promoter and operator sites~\cite{bintu2005transcriptional,sanchez2011effect}. These random events lead to fluctuations in the number of molecules of a given RNA species present, which we refer to as the RNA copy number. Noise in the RNA copy number is typically quantified by statistical measures such as the Fano factor and squared coefficient of variation, both of which involve the mean and variance of the copy number distribution~\cite{paulsson2004summing,paulsson2005models,munsky2012using}. These normalized measures allow copy-number fluctuations to be compared across genes and conditions, and provide information about the underlying transcriptional process. 

Consider one RNA species in a single cell, and let $n(t)$ be its molecular copy number at time $t$.  New molecules appear as transcription is carried out, and each molecule is then degraded. Both processes are random. In addition, the transcription rate can change with the state of the gene, the cell, and its environment~\cite{swain2002intrinsic,paulsson2004summing}. The copy number therefore varies in time and across cells. Mechanistic, thermodynamic, and data-driven models have been developed to connect these processes to measured copy number statistics~\cite{thattai2001intrinsic,shahrezaei2008analytical,bintu2005transcriptional,sanchez2011effect,munsky2012using}, and to derive relations for the Fano factor and squared coefficient of variation~\cite{elowitz2002stochastic,swain2002intrinsic,paulsson2004summing,paulsson2005models,raj2008nature}.

A standard approach in stochastic gene-expression modeling is to specify a microscopic transcription model and derive the resulting distribution of $n(t)$~\cite{paulsson2005models,munsky2012using,sanchez2013genetic}. In the two-state telegraph model, for example, a gene switches between inactive and active states at fixed rates~\cite{peccoud1995markovian,kepler2001stochasticity,sanchez2013genetic,rieckh2014noise}. Experimental measurements, however, show that transcription rates can vary in time, over the cell cycle, and in response to upstream regulation~\cite{golding2005real,raj2006stochastic,chubb2006transcriptional,suter2011mammalian,larson2011real,larson2013direct,zopf2013cellcycle,padovan2015single,corrigan2016continuum,featherstone2016spatially,larsson2019genomic,rodriguez2020transcription,tunnacliffe2020what}. To incorporate such time dependence, here we treat the transcription rate $\lambda(t)$ as a nonnegative stochastic process. Conditional on a realization of $\lambda(t)$, RNA production is then a Poisson process with instantaneous rate $\lambda(t)$. The model is doubly stochastic because the random process $\lambda(t)$ drives a second random process of RNA production~\cite{cox1955statistical,dixit2013quantifying}. We use messenger RNA (mRNA) terminology because mRNA is the principal experimental application, but the derivations can be applied to any RNA species satisfying the stated production and lifetime assumptions.

In Section~\ref{sec:layered-framework} we derive the copy-number mean and Fano factor for mRNA production that is Poisson conditional on the complete transcription-rate trajectory $\lambda(\cdot)$, independent mRNA lifetimes, and a stationary transcription-rate process. For a general lifetime distribution, the result depends on its survival probability and on the mean, variance, and normalized autocorrelation of $\lambda(t)$. For exponential degradation with rate $\mu$, the survival probability is fixed, and the remaining inputs are $\Ebb[\lambda]$, $\Vbb[\lambda]$, and $\rho(h)$, which denote the expected value, variance, and normalized autocorrelation at lag time $h$, respectively. The resulting Fano factor is the sum of a Poisson baseline and a second term transmitted from fluctuations in $\lambda(t)$. A companion work uses the same framework to study how the copy-number variance scales with its mean~\cite{mccune2026subquadratic}; the present work develops and applies the framework itself.

The result can be used in two ways. First, a model for $\lambda(t)$ can be specified and its mean, variance, and autocorrelation calculated. Section~\ref{sec:existing-models} shows that constitutive expression, a random time-independent rate, the two-state telegraph model, and a cell-cycle null model are all recovered as special cases~\cite{zopf2013cellcycle,beentjes2020exact,pountain2024transcription}. Sections~\ref{sec:mm1-queue} and~\ref{sec:drift-diffusion} then derive results for four additional rate processes: a discrete $\mathrm{M/M/1}$ process and drift--diffusion with reflecting, periodic, or first-passage-reset boundaries.

These four rate processes represent different assumptions about the dynamics of $\lambda(t)$. The $\mathrm{M/M/1}$ model changes the rate in fixed increments and provides a tractable discrete example. The reflecting drift--diffusion process is its continuous counterpart on the nonnegative real line. The periodic model confines the rate to a finite interval and identifies its endpoints, producing recurrent dynamics without an explicit one-way reset. The first-passage model instead resets the rate to its lower boundary when it first reaches the upper boundary. This last rule can represent a rate that builds up during a cell cycle and drops at division~\cite{zopf2013cellcycle,padovan2015single,beentjes2020exact}. All four analytical results are compared with direct stochastic simulations.

Second, the same result can be applied directly to a measured trajectory of $\lambda(t)$~\cite{golding2005real,larson2011real,corrigan2016continuum,featherstone2016spatially}. If the trajectory is stationary and ergodic, its mean, variance, and autocorrelation can be estimated without choosing a parametric model. Section~\ref{sec:data-driven-recipe} gives the corresponding estimator, and Section~\ref{sec:data-driven-validation} tests it on synthetic data. This data-driven calculation also applies when the rate process is non-Markovian or its autocorrelation has no simple closed form.

Derivations of all analytical results in this work are provided in the appendices.

\section{Doubly Stochastic Model of mRNA Copy Number}
\label{sec:layered-framework}

\subsection{General lifetime distribution}
\label{subsec:layered-general-model}

We first consider the general case of a stochastic transcription rate and an arbitrary mRNA lifetime distribution. Note that time $t$ ranges from $-\infty$ to $+\infty$. Let $\lambda(t)$ denote the stochastic transcription rate at time $t$. The transcription rate is nonnegative at every time: $\lambda(t)\geq0$.  Conditional on a prescribed trajectory $\lambda(\cdot)$, mRNA production is a nonhomogeneous Poisson process with instantaneous rate $\lambda(t)$. Let $n(t)$ denote the number of mRNA molecules present at time $t$, where $n(t)$ is a nonnegative integer. 
Also, let us denote by $\tau$ the nonnegative lifetime of an individual mRNA and assume $\Ebb[\tau]<\infty$. We assume that each mRNA lifetime is drawn independently from the distribution of $\tau$, and that the lifetimes are independent of the production process and of $\lambda(\cdot)$. For $s\geq0$, we define the cumulative distribution function $\varphi(s)$ and survival function $\overline{\varphi}(s)$ of $\tau$ by
\begin{subequations}
  \label{eq:layered-lifetime-functions}
  \begin{align}
    \varphi(s)
    &\mathrel{:=} \Pbb(\tau\leq s),
    \qquad 0\leq\varphi(s)\leq1,
    \label{eq:layered-lifetime-cdf}\\
    \overline{\varphi}(s)
    &\mathrel{:=} \Pbb(\tau>s)
    =1-\varphi(s),
    \qquad 0\leq\overline{\varphi}(s)\leq1.
    \label{eq:layered-lifetime-survival}
  \end{align}
\end{subequations}

\begin{proposition}[Conditional copy-number distribution]
For an arbitrary mRNA lifetime distribution, the expected copy number conditional on the entire transcription-rate trajectory is
\begin{align}
  \Ebb\!\left[n(t)\,\middle|\,\lambda(\cdot)\right]
  &= \int_0^\infty
  \overline{\varphi}(s)\,
  \lambda(t-s)\,d s \notag\\
  &= \bigl(\lambda*\overline{\varphi}\bigr)(t)\,,
  \label{eq:newell}
\end{align}
where $*$ denotes the causal convolution written explicitly in the preceding line. Moreover,
\begin{equation}
  n(t)\mid\lambda(\cdot)
  \sim
  \operatorname{Poisson}\,\bigl(
    \Ebb[n(t)\mid\lambda(\cdot)]
  \bigr)\,.
  \label{eq:layered-conditional-poisson}
\end{equation}
\end{proposition}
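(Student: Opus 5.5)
We condition throughout on a fixed realization of the rate trajectory $\lambda(\cdot)$. We assume it is locally integrable with $\int_0^\infty \overline{\varphi}(s)\lambda(t-s)\,ds<\infty$; for stationary $\lambda$ with finite mean this holds almost surely because $\int_0^\infty\overline{\varphi}(s)\,ds=\Ebb[\tau]<\infty$. Conditionally, the production events form a nonhomogeneous Poisson process $N$ on $(-\infty,t]$ with intensity $\lambda(u)\,du$. Each event at time $u$ is independently marked with a lifetime $\tau_u$ drawn from the law of $\tau$, and these marks are independent of $N$. The molecule born at $u$ is present at time $t$ exactly when $\tau_u>t-u$, which has probability $\overline{\varphi}(t-u)$. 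So $n(t)$ counts the points of an independent thinning of $N$ with retention probability $p(u)=\overline{\varphi}(t-u)$.

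The key step is the thinning (marking) theorem: an independent $p(u)$-thinning of a Poisson process with intensity $\lambda(u)$ is again Poisson, with intensity $p(u)\lambda(u)$. Hence $n(t)\mid\lambda(\cdot)$ is Poisson with mean $\int_{-\infty}^t \overline{\varphi}(t-u)\lambda(u)\,du$. The substitution $s=t-u$ turns this into \eqref{eq:newell}, and this one argument gives both the mean formula and \eqref{eq:layered-conditional-poisson}.

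To keep the argument self-contained rather than citing the marking theorem, I would verify it with the conditional probability generating function. First truncate to a window $[t-T,t]$ and set $\Lambda_T=\int_{t-T}^t\lambda(u)\,du$. Given the number $K\sim\operatorname{Poisson}(\Lambda_T)$ of births in the window, the birth times are i.i.d. with density $\lambda(u)/\Lambda_T$. Each birth survives independently with overall probability $q_T=\Lambda_T^{-1}\int_{t-T}^t\overline{\varphi}(t-u)\lambda(u)\,du$. The survivor count is therefore Binomial$(K,q_T)$, and
\begin{equation*}
\Ebb\bigl[z^{n_T(t)}\mid\lambda(\cdot)\bigr]=\exp\bigl(\Lambda_T q_T(z-1)\bigr)=\exp\Bigl((z-1)\int_0^T\overline{\varphi}(s)\lambda(t-s)\,ds\Bigr),
\end{equation*}
which is the generating function of a Poisson law with that mean.

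The remaining step, and the only delicate one, is the limit $T\to\infty$. The count $n_T(t)$ increases to $n(t)$ as the window grows. By monotone convergence, the generating function converges for $z\in[0,1]$ to $\exp\bigl((z-1)(\lambda*\overline{\varphi})(t)\bigr)$, and this identifies the law. Here the finiteness of $(\lambda*\overline{\varphi})(t)$ is exactly what guarantees that $n(t)$ is almost surely finite, that is, that only finitely many molecules born in the infinite past survive to time $t$. I expect this to be the main obstacle, and I would handle it by invoking $\Ebb[\tau]<\infty$ together with the stationarity or boundedness of $\lambda$ as noted in the first paragraph.
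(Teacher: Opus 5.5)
Your proposal is correct and follows the paper's own argument: conditionally on $\lambda(\cdot)$, the molecules present at time $t$ are an independent thinning of the production process with retention probability $\overline{\varphi}(t-u)$, so the survivor count is Poisson with mean $\int_0^\infty\overline{\varphi}(s)\lambda(t-s)\,ds$. Your finite-window generating-function check and the limit $T\to\infty$ make rigorous what the paper does informally by summing independently thinned Poisson counts over disjoint infinitesimal age intervals, and you correctly note that finiteness of $(\lambda*\overline{\varphi})(t)$, which the paper leaves implicit, is what keeps $n(t)$ almost surely finite.
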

\noindent Thus, even after the entire trajectory $\lambda(\cdot)$ is prescribed, the random production times and mRNA lifetimes cause $n(t)$ to fluctuate.

\subsection{\texorpdfstring{Stationary transcription rate: mean and variance of the copy number}{Stationary transcription rate: mean and variance of the copy number}}
\label{subsec:layered-stationary-statistics}

We now assume that $\lambda(t)$ is stationary, with a finite positive mean and finite variance.  The mRNA lifetimes remain independent of $\lambda(\cdot)$, as assumed above.
\begin{proposition}[Stationary mean]
Under this stationarity assumption on $\lambda(t)$, the induced copy-number distribution is also stationary. Its mean is 
\begin{equation}
  \Ebb[n]
  = \Ebb[\lambda]\Ebb[\tau]\,.
  \label{eq:layered-stationary-mean}
\end{equation}
\end{proposition}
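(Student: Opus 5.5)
The plan has two parts: first show that $n(t)$ is stationary, then compute its mean by conditioning on the rate trajectory and using the previous Proposition (Conditional copy-number distribution).

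\emph{Stationarity.} By Eq.~\eqref{eq:layered-conditional-poisson}, the law of $n(t)$ is a Poisson mixture whose mixing variable is $\Lambda(t):=(\lambda*\overline{\varphi})(t)$. It therefore suffices to show that the law of $\Lambda(t)$ does not depend on $t$. The map $\lambda(\cdot)\mapsto\int_0^\infty\overline{\varphi}(s)\lambda(t-s)\,ds$ commutes with time shifts, and $\lambda$ is stationary, so $\Lambda(t)\overset{d}{=}\Lambda(0)$. The same argument applies to the finite-dimensional distributions of $n(\cdot)$. Conditional on $\lambda(\cdot)$, the marked point process of (production time, lifetime) pairs is a Poisson process whose intensity measure is $\lambda(u)\,du\,\Pbb(\tau\in d\ell)$. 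Shifting $\lambda$ in time shifts this intensity, so the joint law of $(n(t_1),\dots,n(t_k))$ is invariant under $t_i\mapsto t_i+h$.

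\emph{Mean.} By the tower property and Eq.~\eqref{eq:newell},
\begin{equation*}
\Ebb[n(t)]=\Ebb\bigl[\Ebb[n(t)\mid\lambda(\cdot)]\bigr]=\Ebb\!\left[\int_0^\infty\overline{\varphi}(s)\lambda(t-s)\,ds\right].
\end{equation*}
The integrand is nonnegative because $\lambda\ge0$ and $\overline{\varphi}\ge0$. Tonelli's theorem therefore allows the expectation and the integral to be exchanged, which gives $\int_0^\infty\overline{\varphi}(s)\,\Ebb[\lambda(t-s)]\,ds=\Ebb[\lambda]\int_0^\infty\overline{\varphi}(s)\,ds$ by stationarity. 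The final step is the tail-integral identity $\int_0^\infty\Pbb(\tau>s)\,ds=\Ebb[\tau]$ for nonnegative $\tau$. This follows from writing $\tau=\int_0^\infty\mathbf 1\{\tau>s\}\,ds$ and applying Tonelli again. Since $\Ebb[\lambda]<\infty$ and $\Ebb[\tau]<\infty$, the result is finite, which yields Eq.~\eqref{eq:layered-stationary-mean}.

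The main obstacle is measure-theoretic rather than computational. One must ensure that $\lambda(\cdot)$ is jointly measurable in $(t,\omega)$, so that the convolution is a well-defined random variable and Tonelli applies. One must also justify the stationarity claim at the level of the whole process, not just the one-time marginal. I would handle this by assuming a measurable version of $\lambda$, which is standard for stationary processes with finite mean. I would then use the conditional Poisson construction of the production–lifetime marked point process, so that shift invariance follows directly from the shift invariance of its random intensity measure.
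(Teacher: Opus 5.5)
Your mean computation is the paper's argument: the tower property applied to the conditional mean, exchange of expectation and integral, stationarity of $\Ebb[\lambda(t-s)]$, and the tail-integral identity $\Ebb[\tau]=\int_0^\infty\overline{\varphi}(s)\,ds$ from writing $\tau$ as an integral. Tonelli justifies both exchanges, which the paper leaves implicit. Your shift-invariance argument for the stationarity of $n(\cdot)$, via the conditionally Poisson production--lifetime point process, is correct and proves a claim the paper only asserts; it requires strict (not merely wide-sense) stationarity of $\lambda$, which is the natural reading of the hypothesis.
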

\noindent The stationary mean therefore depends on the mRNA lifetime distribution only through $\Ebb[\tau]$. The stationary variance, on the other hand, generally depends on the complete lifetime distribution.

For a nonnegative integer-valued random variable $x$ with finite positive mean and finite variance, we define its Fano factor by 
\begin{equation}
  \Fbb[x] \mathrel{:=} \frac{\Vbb[x]}{\Ebb[x]}\,.
  \label{eq:layered-fano-definition}
\end{equation}
Provided that $0<\Vbb[\lambda]<\infty$, we denote the normalized autocorrelation of the transcription rate at lag $h$ by 
\begin{equation}
  \rho(h)
  \mathrel{:=}
  \frac{\Ebb[\lambda(t+h)\lambda(t)]-\Ebb[\lambda]^2}
       {\Vbb[\lambda]}\,.
  \label{eq:layered-autocorrelation-definition}
\end{equation}

\begin{proposition}[Fano factor for arbitrary mRNA lifetimes and a stationary transcription rate]
If $0<\Vbb[\lambda]<\infty$, then
\begin{align}
  \Fbb[n] ={}& 1 \mathbin{+}{} \notag\\
  &\frac{\Vbb[\lambda]}{\Ebb[n]}
  \int_0^\infty\!\!\int_0^\infty
  \overline{\varphi}(s_1)\,
  \overline{\varphi}(s_2)
  \rho(|s_2-s_1|)
  \,d s_1\,d s_2\,.
  \label{eq:general-fano}
\end{align}
\end{proposition}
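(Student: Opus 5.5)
The plan is to use the law of total variance, conditioning on the full trajectory $\lambda(\cdot)$, together with the conditional Poisson result of the first proposition. Write $\Lambda(t) \mathrel{:=} \Ebb[n(t)\mid\lambda(\cdot)] = (\lambda*\overline{\varphi})(t)$. By Eq.~\eqref{eq:layered-conditional-poisson}, the conditional variance equals the conditional mean. Hence
\begin{equation*}
  \Vbb[n] = \Ebb\bigl[\Vbb[n\mid\lambda(\cdot)]\bigr] + \Vbb\bigl[\Ebb[n\mid\lambda(\cdot)]\bigr] = \Ebb[\Lambda] + \Vbb[\Lambda].
\end{equation*}
By the stationary-mean proposition, $\Ebb[\Lambda]=\Ebb[n]$, so dividing by $\Ebb[n]$ gives the Poisson baseline $1$. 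It then remains to show
\begin{equation*}
  \Vbb[\Lambda]=\Vbb[\lambda]\int_0^\infty\!\!\int_0^\infty \overline{\varphi}(s_1)\overline{\varphi}(s_2)\rho(|s_2-s_1|)\,ds_1\,ds_2 .
\end{equation*}

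For this, I would write $\Lambda(t)-\Ebb[\Lambda]=\int_0^\infty \overline{\varphi}(s)\,[\lambda(t-s)-\Ebb[\lambda]]\,ds$. This uses $\int_0^\infty\overline{\varphi}(s)\,ds=\Ebb[\tau]$, which also gives $\Ebb[\Lambda]=\Ebb[\lambda]\Ebb[\tau]$. Squaring produces a double integral. I would take the expectation inside using Fubini/Tonelli, which gives
\begin{equation*}
  \int\!\!\int \overline{\varphi}(s_1)\overline{\varphi}(s_2)\,\mathrm{Cov}\bigl(\lambda(t-s_1),\lambda(t-s_2)\bigr)\,ds_1\,ds_2 .
\end{equation*}
Stationarity turns the covariance into $\Vbb[\lambda]\rho(|s_2-s_1|)$, since by symmetry it depends only on the absolute lag. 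The result is independent of $t$, consistent with stationarity of $n$.

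The main obstacle is justifying the interchange of expectation and integration, and the finiteness of the double integral. I would first try Cauchy--Schwarz: $|\mathrm{Cov}|\le\Vbb[\lambda]$, so the integrand is bounded in absolute value by $\Vbb[\lambda]\,\overline{\varphi}(s_1)\overline{\varphi}(s_2)$. This bound is integrable, with integral $\Vbb[\lambda]\Ebb[\tau]^2<\infty$, because $\Ebb[\tau]<\infty$. To apply Fubini to the random integrand itself, I would bound
\begin{equation*}
  \Ebb\bigl|(\lambda(t-s_1)-\Ebb\lambda)(\lambda(t-s_2)-\Ebb\lambda)\bigr|\le\Vbb[\lambda]
\end{equation*}
by Cauchy--Schwarz, and use joint measurability of $\lambda$ in $(t,\omega)$. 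I would assume this as a standing regularity condition. The same argument also shows $\Lambda\in L^2$, so $\Vbb[n]<\infty$ and the Fano factor is well defined.
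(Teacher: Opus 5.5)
Your proposal is correct and follows essentially the same route as the paper: the law of total variance with $\Vbb[n\mid\lambda(\cdot)]=\Ebb[n\mid\lambda(\cdot)]$, then writing the variance of the causal convolution as a double integral of the stationary autocovariance. Your Cauchy--Schwarz bound, which justifies the Fubini interchange and shows $\Vbb[n]<\infty$, is a rigor step the paper leaves implicit.
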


\noindent The first term $1$ on the right-hand side is the intrinsic Poisson contribution from the random production times and mRNA lifetimes. The second term is the additional copy-number variability transmitted from fluctuations in $\lambda(t)$. The two survival factors account for whether mRNAs of different ages remain present, while $\rho(|s_2-s_1|)$ measures how strongly the corresponding transcription rates are correlated.

\subsection{Exponential lifetimes}
\label{subsec:layered-exponential-lifetimes}

We now specialize to the exponential survival function
\begin{equation}
  \overline{\varphi}(s)
  =
  e^{-\mu s},
  \qquad \mu>0.
  \label{eq:layered-exponential-survival}
\end{equation}
where $\mu$ is the degradation rate. This law is memoryless: every surviving mRNA has the same instantaneous degradation rate $\mu$, independent of its age.  Furthermore, it is assumed that each mRNA molecule degrades independently. Hence their total instantaneous degradation rate is $\mu n(t)$. The stationary mean is then
\begin{equation}
  \Ebb[n] = \frac{\Ebb[\lambda]}{\mu}\,.
  \label{eq:layered-exponential-mean}
\end{equation}

\begin{proposition}[Fano factor for exponential lifetimes]
Under the preceding finite-moment assumptions, 
\begin{equation}
  \Fbb[n]
  =
  1+
  \frac{\Vbb[\lambda]}{\mu\,\Ebb[n]}
  \int_0^\infty e^{-\mu h}\rho(h)\,d h\,.
  \label{eq:poisson-fano}
\end{equation}
\end{proposition}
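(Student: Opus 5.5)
We obtain this as a direct specialization of the general Fano factor \eqref{eq:general-fano}, substituting the exponential survival function \eqref{eq:layered-exponential-survival}. We need to show that
\begin{equation*}
  I \mathrel{:=} \int_0^\infty\!\!\int_0^\infty e^{-\mu s_1}e^{-\mu s_2}\rho(|s_2-s_1|)\,ds_1\,ds_2
  = \frac{1}{\mu}\int_0^\infty e^{-\mu h}\rho(h)\,dh\,.
\end{equation*}
The prefactor $\Vbb[\lambda]/\Ebb[n]$ is then unchanged, with $\Ebb[n]=\Ebb[\lambda]/\mu$ by \eqref{eq:layered-exponential-mean}. Since $|\rho(h)|\le1$ by Cauchy--Schwarz for a stationary process with finite variance, the integrand is bounded by $e^{-\mu(s_1+s_2)}$. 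Fubini--Tonelli therefore applies, and the integrals may be rearranged freely.

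The plan is to exploit the symmetry of the integrand under $s_1\leftrightarrow s_2$. We write $I=2\int_{0\le s_1\le s_2} e^{-\mu(s_1+s_2)}\rho(s_2-s_1)\,ds_1\,ds_2$; the diagonal has measure zero and contributes nothing. On this region we change variables to $s_1$ and $h=s_2-s_1\ge0$, which has unit Jacobian. The integrand then factorizes as $e^{-2\mu s_1}e^{-\mu h}\rho(h)$ over the product domain $s_1\ge0$, $h\ge0$. The $s_1$ integral gives $1/(2\mu)$, so
\begin{equation*}
I = 2\cdot\frac{1}{2\mu}\int_0^\infty e^{-\mu h}\rho(h)\,dh,
\end{equation*}
which is the claimed form \eqref{eq:poisson-fano}.

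The only point requiring care is justifying the interchange and change of variables, i.e.\ the absolute integrability handled above via $|\rho|\le1$. We also need $\rho$ to be measurable in $h$. That holds if, as implicitly assumed throughout, $\lambda$ is jointly measurable, so that the covariance function is measurable. No further obstacle is expected: the general formula \eqref{eq:general-fano}, taken as given, already encodes all the probabilistic content, including conditional Poissonity and stationarity.
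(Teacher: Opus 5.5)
Your proposal is correct and follows the paper's own derivation: substitute the exponential survival function into the general double integral, split the domain along $s_1=s_2$ by symmetry, change variables to the age difference $h$, and integrate out $s_1$ to obtain the factor $1/\mu$. Your Fubini--Tonelli justification via $|\rho(h)|\leq1$ adds rigor that the paper leaves implicit.
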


\begin{proposition}[Poisson floor and slow-fluctuation ceiling]
From the expression above, we obtain bounds on the Fano factor. For every stationary transcription-rate process satisfying the assumptions above,
\begin{equation}
  1
  \leq
  \Fbb[n]
  \leq
  1+
  \frac{\Vbb[\lambda]}{\mu\,\Ebb[\lambda]}.
  \label{eq:static-ceiling}
\end{equation}
\end{proposition}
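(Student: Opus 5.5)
The plan is to read both bounds off the exponential-lifetime formula \eqref{eq:poisson-fano}, using only that $\rho(h)$ is a normalized autocorrelation. Since $\lambda$ is stationary with finite variance, the Cauchy--Schwarz inequality gives
\[
\bigl|\Ebb[\lambda(t+h)\lambda(t)]-\Ebb[\lambda]^2\bigr|\leq \Vbb[\lambda],
\]
so $-1\leq\rho(h)\leq1$ for every $h$, with $\rho(0)=1$. The integral $I:=\int_0^\infty e^{-\mu h}\rho(h)\,dh$ is therefore absolutely convergent. Its upper bound is immediate: $I\leq\int_0^\infty e^{-\mu h}\,dh=1/\mu$. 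Substituting this and $\Ebb[n]=\Ebb[\lambda]/\mu$ from \eqref{eq:layered-exponential-mean} into \eqref{eq:poisson-fano} gives
\[
\Fbb[n]\leq 1+\frac{\Vbb[\lambda]}{\mu\,\Ebb[n]}\cdot\frac1\mu = 1+\frac{\Vbb[\lambda]}{\mu\,\Ebb[\lambda]},
\]
which is the ceiling. Equality holds when $\rho\equiv1$, i.e.\ for a static random rate, which explains the name ``slow-fluctuation ceiling''.

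The lower bound $\Fbb[n]\geq1$ needs $I\geq0$. Pointwise bounds on $\rho$ are not enough for this, because $\rho$ can be negative at some lags. The main step is therefore to show that the transmitted term is a variance, and I would do this by going back to the general formula \eqref{eq:general-fano} rather than \eqref{eq:poisson-fano}. There the extra term is
\[
\frac{1}{\Ebb[n]}\iint\overline{\varphi}(s_1)\,\overline{\varphi}(s_2)\,\mathrm{Cov}\bigl(\lambda(t-s_1),\lambda(t-s_2)\bigr)\,ds_1\,ds_2
=\frac{\Vbb\bigl[(\lambda*\overline{\varphi})(t)\bigr]}{\Ebb[n]}\geq0 .
\]
The identification with a variance follows by Fubini, which is justified because $\int_0^\infty\overline{\varphi}=\Ebb[\tau]<\infty$ and the covariance is bounded by $\Vbb[\lambda]$. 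Equivalently, I would invoke the law of total variance together with the conditional Poisson structure \eqref{eq:layered-conditional-poisson}:
\[
\Vbb[n]=\Ebb\bigl[\Vbb[n\mid\lambda]\bigr]+\Vbb\bigl[\Ebb[n\mid\lambda]\bigr]=\Ebb[n]+\Vbb\bigl[\lambda*\overline{\varphi}\bigr]\geq\Ebb[n].
\]
Specializing to $\overline{\varphi}(s)=e^{-\mu s}$ then shows that the integral term in \eqref{eq:poisson-fano} is nonnegative.

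If a direct argument on \eqref{eq:poisson-fano} were preferred, I would check that reducing the double integral to the single integral $I$ gives $\iint e^{-\mu(s_1+s_2)}\rho(|s_1-s_2|)\,ds_1\,ds_2=I/\mu$. This quantity is nonnegative because $\rho$ is a positive-definite function, which can be seen via Bochner's theorem: the exponential kernel has a nonnegative spectral weight $\mathrm{Re}\,\bigl(1/(\mu-i\omega)\bigr)=\mu/(\mu^2+\omega^2)>0$. The variance argument avoids the spectral machinery, so I would present that one and mention the spectral view only as a remark.
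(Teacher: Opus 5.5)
Your proposal is correct and follows the paper's own argument. The floor comes from the law of total variance applied to the conditionally Poisson copy number, and the ceiling comes from Cauchy--Schwarz giving $|\rho(h)|\leq1$, hence $\int_0^\infty e^{-\mu h}\rho(h)\,dh\leq1/\mu$, combined with $\Ebb[n]=\Ebb[\lambda]/\mu$. The paper also treats the degenerate case $\Vbb[\lambda]=0$ in one line: there $\rho$ is undefined, but both bounds collapse to $\Fbb[n]=1$, which your identity $\Vbb[n]=\Ebb[n]+\Vbb[\lambda*\overline{\varphi}]$ already covers.
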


\noindent The lower bound is the intrinsic Poisson floor of the model. The Fano factor is computed from the stationary distribution of $n$. Experimentally, this distribution may be estimated from a snapshot of many identically prepared cells. If the stationary process is also ergodic, the same distribution may instead be estimated from one sufficiently long time trace. When $\lambda(t)$ changes negligibly during an mRNA lifetime, conditional on the sampled value of $\lambda$, the mRNA count has the approximate distribution $n\mid\lambda\sim\operatorname{Poisson}(\lambda/\mu)$. The law of total variance~\cite{ross2014introduction} then gives the within-sample Poisson variance plus the between-sample variance of $\lambda/\mu$, yielding the upper bound. Because this bound is approached when rate fluctuations are slow relative to mRNA degradation, we call it the slow-fluctuation ceiling. The later sections obtain explicit noise predictions by specifying the dynamics of $\lambda(t)$ and calculating $\Ebb[\lambda]$, $\Vbb[\lambda]$, and $\rho(h)$.

\section{Existing Models as Special Cases}
\label{sec:existing-models}
Equation~\eqref{eq:general-fano} converts the statistics of the fluctuating transcription rate $\lambda(t)$ into a prediction for the mRNA Fano factor. We first demonstrate this procedure using familiar models for which the mRNA statistics are already known. Specifically, these models involve a rate fixed at the same value in every realization (constitutive Poisson expression)~\cite{thattai2001intrinsic}; a rate that is time-independent within each realization but random across realizations (mixed-Poisson distribution)~\cite{cox1955statistical}; and a rate that switches between two states (standard telegraph model)~\cite{peccoud1995markovian,kepler2001stochasticity}. We also consider the cell-cycle dosage model, in which the transcription rate halves at cell division. For this model, we assume that the mRNA lifetime is short relative to the cell-cycle duration and derive the resulting slow-cell-cycle approximation~\cite{zopf2013cellcycle,beentjes2020exact,pountain2024transcription}. Reproducing these known results provides a check of the framework before we apply the same procedure to the $\mathrm{M/M/1}$ and drift--diffusion models in Sections~\ref{sec:mm1-queue} and~\ref{sec:drift-diffusion}.

\subsection{Constitutive expression}

If $\lambda(t)\mathrel{:=}\lambda_0$, where $\lambda_0>0$ is fixed and identical across all realizations, then
\begin{equation}
  \Ebb[\lambda]=\lambda_0,
  \qquad
  \Vbb[\lambda]=0.
  \label{eq:constitutive-rate-moments}
\end{equation}
Because the rate trajectory is constant-valued and identical in every realization, the conditional mean is simply
\begin{equation}
  \Ebb[n\mid\lambda(\cdot)]
  =\lambda_0\int_0^\infty
  \overline{\varphi}(s)\,d s
  =\lambda_0\Ebb[\tau]
  =\Ebb[n].
  \label{eq:constitutive-copy-number-mean}
\end{equation}
The law of total variance therefore gives
\begin{subequations}
\label{eq:constitutive-fano}
\begin{align}
  \Vbb[n]
  &=\Ebb\!\left[\Vbb[n\mid\lambda(\cdot)]\right]
  +\Vbb\!\left[\Ebb[n\mid\lambda(\cdot)]\right]
  =\Ebb[n],
  \label{eq:constitutive-variance}\\
  \Fbb[n]
  &=\frac{\Vbb[n]}{\Ebb[n]}=1.
  \label{eq:constitutive-fano-result}
\end{align}
\end{subequations}
We call this constitutive expression because the gene has no switching regulation and produces mRNA at a constant underlying rate, although individual production events remain random. This is ordinary Poisson expression and is the minimum Fano factor achievable within the present conditionally Poisson production model. Deviations above this floor require either stochastic fluctuations in $\lambda(t)$ or additional sources of variation.

\subsection{Random time-independent transcription rate}

Let $\Lambda_0$ be a nonnegative random variable with finite positive mean and finite variance, drawn independently for each realization, and set  
\begin{equation}
  \lambda(t)\mathrel{:=}\Lambda_0,
  \qquad \text{for all }t.
  \label{eq:random-static-rate}
\end{equation}
This idealized model represents persistent cell-to-cell heterogeneity that is effectively fixed over the observation interval. It is also the limiting case in which rate fluctuations are much slower than mRNA degradation.  The rate varies across realizations but not over time within a realization. Consequently,
\begin{equation}
  \Ebb[\lambda]=\Ebb[\Lambda_0],
  \qquad
  \Vbb[\lambda]=\Vbb[\Lambda_0],
  \qquad
  \rho(h)=1.
  \label{eq:random-static-rate-statistics}
\end{equation}
Conditional on $\Lambda_0$, the stationary copy number has a Poisson distribution:
\begin{equation}
  n\mid\Lambda_0
  \sim
  \operatorname{Poisson}\,\bigl(
    \Lambda_0\Ebb[\tau]
  \bigr).
  \label{eq:random-static-conditional-copy-number}
\end{equation}
The laws of total expectation and total variance~\cite{ross2014introduction} give
\begin{subequations}
\label{eq:random-static-copy-number-statistics}
\begin{align}
  \Ebb[n]
  &=\Ebb[\Lambda_0]\Ebb[\tau],
  \label{eq:random-static-copy-number-mean}\\
  \Vbb[n]
  &=\Ebb[n]
  +\Vbb[\Lambda_0]\Ebb[\tau]^2,
  \label{eq:random-static-copy-number-variance}\\
  \Fbb[n]
  &=1+
  \frac{\Vbb[\Lambda_0]}
       {\Ebb[\Lambda_0]}
  \Ebb[\tau].
  \label{eq:random-static-copy-number-fano}
\end{align}
\end{subequations}
Thus the unconditional copy number follows a mixed-Poisson distribution rather than a Poisson distribution. For exponential degradation, $\Ebb[\tau]=1/\mu$, and the Fano factor becomes
\begin{equation}
  \Fbb[n]
  =
  1+
  \frac{\Vbb[\Lambda_0]}
       {\mu\,\Ebb[\Lambda_0]},
  \label{eq:random-static-exponential-fano}
\end{equation}
which is the slow-fluctuation ceiling in equation~\eqref{eq:static-ceiling}. 

\subsection{Two-state telegraph model}

In the two-state telegraph model~\cite{peccoud1995markovian,kepler2001stochasticity}, let $x(t)$ denote the activity state of the gene at time $t$. For every real $t$, $x(t)$ is either $0$ or $1$: $x(t)=0$ means that the gene is inactive, whereas $x(t)=1$ means that it is active. The state switches randomly between these two values. For a short time interval of length $\Delta t$,
\begin{subequations}
\label{eq:telegraph-transitions}
\begin{align}
  \Pbb\bigl(x(t+\Delta t)=1\mid x(t)=0\bigr)
  &=k_{\mathrm{on}}\,\Delta t+o(\Delta t),
  \label{eq:telegraph-on-transition}\\
  \Pbb\bigl(x(t+\Delta t)=0\mid x(t)=1\bigr)
  &=k_{\mathrm{off}}\,\Delta t+o(\Delta t),
  \label{eq:telegraph-off-transition}
\end{align}
\end{subequations}
where $k_{\mathrm{on}}>0$ is the rate at which an inactive gene switches on, and $k_{\mathrm{off}}>0$ is the rate at which an active gene switches off, with units of inverse time. The corresponding transcription rate is then
\begin{equation}
  \lambda(t)
  =\lambda_{\mathrm{on}}x(t)
  =
  \begin{cases}
    0, & x(t)=0,\\
    \lambda_{\mathrm{on}}, & x(t)=1,
  \end{cases}
  \label{eq:telegraph-transcription-rate}
\end{equation}
where $\lambda_{\mathrm{on}}>0$ is the transcription rate while the gene is active, with units of mRNA produced per unit time. Thus $\lambda_{\mathrm{on}}$ determines how quickly mRNA is produced in the active state, whereas $k_{\mathrm{on}}$ and $k_{\mathrm{off}}$ determine how quickly the gene switches between its two states. We define
\begin{equation}
  \kappa\mathrel{:=}k_{\mathrm{on}}+k_{\mathrm{off}},
  \label{eq:telegraph-relaxation-rate}
\end{equation}
where $\kappa$ is the relaxation rate of the telegraph process and $1/\kappa$ is its correlation time. 

At stationarity, probability balance requires $\Pbb(x=0)k_{\mathrm{on}}=\Pbb(x=1)k_{\mathrm{off}}$. Together with $\Pbb(x=0)+\Pbb(x=1)=1$, this gives 
the stationary distribution of the transcription rate:
\begin{subequations}
\label{eq:telegraph-stationary-distribution}
\begin{align}
  \Pbb(x=1)
  &=\Pbb(\lambda=\lambda_{\mathrm{on}})
  =p_{\mathrm{on}}
  \mathrel{:=}\frac{k_{\mathrm{on}}}{\kappa},
  \label{eq:telegraph-on-probability}\\
  \Pbb(x=0)
  &=\Pbb(\lambda=0)
  =1-p_{\mathrm{on}}
  =\frac{k_{\mathrm{off}}}{\kappa}.
  \label{eq:telegraph-off-probability}
\end{align}
\end{subequations}
\noindent
Its stationary mean, variance, and normalized autocorrelation are therefore
\begin{subequations}
\label{eq:telegraph-rate-statistics}
\begin{align}
  \Ebb[\lambda]
  &=\lambda_{\mathrm{on}}p_{\mathrm{on}},
  \label{eq:telegraph-rate-mean}\\
  \Vbb[\lambda]
  &=\lambda_{\mathrm{on}}^2
  p_{\mathrm{on}}(1-p_{\mathrm{on}}),
  \label{eq:telegraph-rate-variance}\\
  \rho(h)
  &=e^{-\kappa|h|},
  \qquad .
  \label{eq:telegraph-rate-autocorrelation}
\end{align}
\end{subequations}
\noindent
For exponential degradation,
\begin{equation}
  \Ebb[n]
  =\frac{\Ebb[\lambda]}{\mu}
  =\frac{\lambda_{\mathrm{on}}p_{\mathrm{on}}}{\mu}.
  \label{eq:telegraph-copy-number-mean}
\end{equation}
Substitution into equation~\eqref{eq:poisson-fano} gives
\begin{align}
  \Fbb[n]
  &=1+
  \frac{\Vbb[\lambda]}{\mu\,\Ebb[n]}
  \int_0^\infty e^{-(\mu+\kappa)h}\,d h \notag\\
  &=1+
  \frac{\Vbb[\lambda]}
       {\mu(\mu+\kappa)\Ebb[n]} \notag\\
  &=1+
  \frac{\lambda_{\mathrm{on}}(1-p_{\mathrm{on}})}
       {\mu+\kappa}.
  \label{eq:telegraph-fano}
\end{align}
Equivalently, the excess above the Poisson floor can be written as
\begin{equation}
  \Fbb[n]
  =1+
  \frac{\Vbb[\lambda]}
       {\mu\,\Ebb[\lambda]}
  \frac{\mu}{\mu+\kappa}.
  \label{eq:telegraph-filtering}
\end{equation}
\noindent The first factor on the right-hand side is the excess noise that would be obtained if the transcription rate remained fixed during an mRNA lifetime. The second factor describes temporal filtering by mRNA degradation. To compare the two processes explicitly, let us define the switching correlation time $\tau_{\mathrm{switch}}:=1/\kappa$ and the mRNA degradation time $\tau_{\mathrm{mRNA}}:=1/\mu$.  Holding $p_{\mathrm{on}}$ and $\lambda_{\mathrm{on}}$ fixed, rapid gene switching, $\tau_{\mathrm{switch}}\ll\tau_{\mathrm{mRNA}}$ (equivalently $\kappa\gg\mu$), averages the active and inactive states over an mRNA lifetime and gives $\Fbb[n]\to1$. Slow switching, $\tau_{\mathrm{switch}}\gg\tau_{\mathrm{mRNA}}$ (equivalently $\kappa\ll\mu$), leaves the transcription rate nearly fixed over an mRNA lifetime and gives
\begin{equation}
  \Fbb[n]
  \longrightarrow
  1+
  \frac{\Vbb[\lambda]}
       {\mu\,\Ebb[\lambda]}
  =
  1+
  \frac{\lambda_{\mathrm{on}}(1-p_{\mathrm{on}})}{\mu},
  \label{eq:telegraph-static-limit}
\end{equation}
which is the slow-fluctuation ceiling in equation~\eqref{eq:static-ceiling}. Thus the general formula recovers the standard telegraph-model result~\cite{peccoud1995markovian,kepler2001stochasticity} and shows explicitly how the switching timescale controls the additional mRNA noise.

\subsection{Cell-cycle null model}

The cell-cycle null model isolates the copy-number variation caused by gene replication, without introducing any additional transcriptional regulation~\cite{pountain2024transcription}. Let $a\in[0,c)$ denote the age of a cell since its most recent division, where $c>0$ is the cell-cycle duration. Suppose that the gene replicates at age $\theta c$, where $\theta\in(0,1)$ is the fraction of the cell cycle completed at replication, and that each gene copy produces mRNA at the constant rate $\beta>0$. The corresponding transcription rate at cell age $a$ is
\begin{equation}
  \lambda(a)
  =
  \begin{cases}
    \beta, & 0\leq a<\theta c,\\
    2\beta, & \theta c\leq a<c.
  \end{cases}
  \label{eq:cell-cycle-rate}
\end{equation}
Thus replication doubles the transcription rate, and division returns each daughter cell to the one-copy state. This null model isolates the discrete dosage change and deliberately omits continuous cell-growth effects on transcription, which are treated in growth-coupled gene-expression models~\cite{lin2018homeostasis,golding2024primer}.

To obtain the distribution seen in a population snapshot, assume an asynchronous, exponentially growing population in which every cell divides at age $c$. The cell-age density $p_{\mathrm{age}}(a)$ is the probability density of the age of a cell selected from a population snapshot.  Exponential growth overrepresents young cells, giving~\cite{powell1956growth,beentjes2020exact}.
\begin{equation}
  p_{\mathrm{age}}(a)
  \mathrel{:=}
  \frac{2\ln 2}{c}\,2^{-\frac{a}{c}},
  \qquad 0\leq a<c.
  \label{eq:cell-cycle-age-density}
\end{equation}
This density is normalized because
\begin{equation*}
  \int_0^c p_{\mathrm{age}}(a)\,d a=1.
\end{equation*}
Consequently, the fraction of sampled cells that have already passed the replication time is
\begin{equation}
  f
  \mathrel{:=}
  \Pbb(a\geq\theta c)
  =\int_{\theta c}^{c}p_{\mathrm{age}}(a)\,d a
  =2^{1-\theta}-1.
  \label{eq:cell-cycle-replicated-fraction}
\end{equation}
The snapshot transcription rate therefore has the binary distribution
\begin{equation}
  \Pbb(\lambda=\beta)=1-f,
  \qquad
  \Pbb(\lambda=2\beta)=f,
  \label{eq:cell-cycle-rate-distribution}
\end{equation}
and hence
\begin{subequations}
\label{eq:cell-cycle-moments}
\begin{align}
  \Ebb[\lambda]
  &=\beta(1+f),
  \label{eq:cell-cycle-rate-mean}\\
  \Vbb[\lambda]
  &=\beta^2 f(1-f).
  \label{eq:cell-cycle-rate-variance}
\end{align}
\end{subequations}

For exponential degradation, assume that the cell cycle is slow compared with an mRNA lifetime, $\mu c\gg1$. The transcription rate can then be treated as fixed at its current value over the recent history that contributes appreciably to the present copy number. Under this slow-fluctuation approximation,
\begin{equation}
  n\mid\lambda
  \sim
  \operatorname{Poisson}\!\left(\frac{\lambda}{\mu}\right).
  \label{eq:cell-cycle-conditional-poisson}
\end{equation}
The laws of total expectation and total variance now give
\begin{subequations}
\label{eq:cell-cycle-copy-number-statistics}
\begin{align}
  \Ebb[n]
  &=\frac{\Ebb[\lambda]}{\mu}
  =\frac{\beta(1+f)}{\mu},
  \label{eq:cell-cycle-copy-number-mean}\\
  \Vbb[n]
  &\approx\Ebb[n]
  +\frac{\beta^2 f(1-f)}{\mu^2},
  \label{eq:cell-cycle-copy-number-variance}\\
  \Fbb[n]
  &\approx1+
  \frac{\beta^2 f(1-f)}{\mu^2\Ebb[n]}
  =1+
  \frac{\beta f(1-f)}{\mu(1+f)}.
  \label{eq:cell-cycle-fano}
\end{align}
\end{subequations}
The degradation kernel in equation~\eqref{eq:poisson-fano} gives appreciable weight only to lags $h$ of order $1/\mu$. When $\mu c\gg1$, these lags are much shorter than a cell cycle, so the transcription rate changes negligibly over them and $\rho(h)\approx1$. The Fano factor consequently approaches the slow-fluctuation ceiling~\eqref{eq:static-ceiling}.  The excess above the Poisson floor is therefore caused by mixing cells with transcription rates $\beta$ and $2\beta$ in the snapshot. It vanishes when all sampled cells have the same transcription rate, $f=0$ or $f=1$. If the mRNA lifetime is not short relative to the cell cycle, the static-rate approximation is insufficient and the full cell-cycle history of $\lambda(t)$ must be retained.


\section{General Recipe and Data-Driven Application}
\label{sec:data-driven-recipe}
In the preceding section, we evaluated equation~\eqref{eq:poisson-fano} by choosing explicit models for the transcription-rate process $\lambda(t)$ and deriving their statistics. We now consider a data-driven alternative in which $\lambda(t)$ is stationary, and one of its trajectories is measured directly. To infer ensemble statistics from that trajectory, we additionally assume ergodicity. Under exponential degradation with known rate $\mu$, equation~\eqref{eq:poisson-fano} requires the mean $\Ebb[\lambda]$, variance $\Vbb[\lambda]$, and normalized autocorrelation $\rho(h)$. Ergodicity allows these ensemble quantities to be estimated from a sufficiently long trajectory. We evaluate the remaining integral numerically, thereby predicting the mRNA Fano factor without specifying a parametric model for the dynamics of $\lambda(t)$. Hatted symbols denote quantities estimated from the trajectory.

Suppose a stationary transcription-rate trajectory is observed at $N$ uniformly spaced times. The sampling times and total observation time are
\begin{equation}
  \begin{aligned}
    t_i &\mathrel{:=} t_1+(i-1)\Delta t,
    &&i=1,\ldots,N,\\
    T &\mathrel{:=} t_N-t_1=(N-1)\Delta t.
  \end{aligned}
  \label{eq:trajectory-sampling-times}
\end{equation}
\noindent
Estimate the stationary mean and variance from these observations as
\begin{subequations}
  \label{eq:trajectory-moment-estimates}
  \begin{align}
    \widehat{\Ebb}[\lambda]
    &\mathrel{:=} \frac{1}{N}\sum_{i=1}^{N}\lambda(t_i),
    \label{eq:trajectory-mean-estimate}\\
    \widehat{\Vbb}[\lambda]
    &\mathrel{:=} \frac{1}{N}\sum_{i=1}^{N}
    \left[\lambda(t_i)-\widehat{\Ebb}[\lambda]\right]^2.
    \label{eq:trajectory-variance-estimate}
  \end{align}
\end{subequations}
\noindent
For the lag $h_k\mathrel{:=} k\Delta t$, where $k=0,1,\ldots,K_{\max}$ and $1\leq K_{\max}<N$, we estimate the normalized autocorrelation by
\begin{align}
  \widehat{\rho}(h_k)
  \mathrel{:=}{}& \frac{1}{(N-k)\,\widehat{\Vbb}[\lambda]} \notag\\
  &\sum_{i=1}^{N-k}
  \bigl[\lambda(t_i)-\widehat{\Ebb}[\lambda]\bigr]
  \bigl[\lambda(t_{i+k})-\widehat{\Ebb}[\lambda]\bigr].
  \label{eq:rho-hat}
\end{align}
\noindent
Finally, we approximate the remaining integral by connecting adjacent sampled values with straight lines. This trapezoidal rule gives half weight to the two endpoints and full weight to the interior points, producing the Fano-factor estimate 
\begin{align}
  \widehat{\Fbb}[n]
  \mathrel{:=} 1+
  \frac{\widehat{\Vbb}[\lambda]}{\widehat{\Ebb}[\lambda]}\,\Delta t
  \Biggl[
  \frac{\widehat{\rho}(0)}{2}
  &+\sum_{k=1}^{K_{\max}-1} e^{-\mu h_k}\widehat{\rho}(h_k) \notag\\
  &+\frac{e^{-\mu h_{K_{\max}}}
  \widehat{\rho}(h_{K_{\max}})}{2}
  \Biggr].
  \label{eq:data-driven-fano}
\end{align}
The cutoff $K_{\max}$ is chosen so that the omitted, exponentially weighted correlation tail is negligible, while $\Delta t$ must be small enough to resolve the weighted autocorrelation. Equation~\eqref{eq:layered-exponential-mean} gives the corresponding mean-copy-number estimate,
\begin{equation}
  \widehat{\Ebb}[n]
  \mathrel{:=}
  \frac{\widehat{\Ebb}[\lambda]}{\mu}.
  \label{eq:data-driven-mean}
\end{equation}
Thus only the degradation rate $\mu$ must be known independently, for example from an mRNA half-life measurement.

Within the doubly stochastic framework and under the stationarity, ergodicity, and exponential-degradation assumptions above, the recipe does not require a parametric model for the dynamics of $\lambda(t)$. It applies whenever $\lambda(t)$ can be measured or inferred---for example, from MS2/PP7 tagging, intronic FISH, or live-cell fluorescence reporters of nascent transcription~\cite{golding2005real,larson2011real,raj2008imaging}. The next section demonstrates this process on synthetic data.

\section{Proof-of-Principle of Data-Driven Application}
\label{sec:data-driven-validation}
We test the data-driven recipe using a synthetic time series for which an analytical Fano-factor benchmark is known. Let $\lambda(t)$ follow the stationary Ornstein--Uhlenbeck process~\cite{uhlenbeck1930theory,gardiner2009handbook},
\begin{align}
  d\lambda(t) = -\gamma\bigl[\lambda(t)-\lambda_0\bigr]\,dt
  + \sigma\sqrt{2\gamma}\,dW(t),
  \label{eq:ou-process}
\end{align}
where $W(t)$ is a standard Wiener process. The Ornstein--Uhlenbeck process is a linear mean-reverting process driven by Gaussian noise. Its stationary distribution is therefore Gaussian, with mean $\lambda_0$ and variance $\sigma^2$.  At stationarity, $\Ebb[\lambda]=\lambda_0$, $\Vbb[\lambda]=\sigma^2$, and $\rho(h)=e^{-\gamma|h|}$, with $\gamma$ the relaxation rate. Defining $\kappa \mathrel{:=} \gamma/\mu$, equation~\eqref{eq:poisson-fano} becomes
\begin{align}
  \Fbb^{\mathrm{OU}}[n] = 1
  + \Ebb[n]\,\frac{\sigma^2}{\lambda_0^2}\,\frac{1}{1+\kappa},
  \label{eq:ou-fano}
\end{align}
which provides the OU benchmark. We take $\lambda_0 = 5$, $\sigma = 1$, $\gamma = 0.5$, and $\mu = 1$, giving $\Ebb[n] = 5$, $\kappa = 0.5$, and $\Fbb^{\mathrm{OU}}[n] = 17/15 \simeq 1.1333$. A Gaussian OU process is not strictly nonnegative, but here $\Pbb(\lambda<0)\simeq2.9\times10^{-7}$. When generating mRNA, we replace these rare negative values by zero; the resulting correction is negligible at the reported precision. We then simulate the coupled process $(\lambda(t),n(t))$ and obtain two estimates of $\Fbb[n]$ from each realization.

The data-driven estimate uses the sampled rate trajectory and the known degradation rate. We sample $\lambda(t)$ at intervals $\Delta t=0.1/\mu$, estimate its mean, variance, and autocorrelation using equations~\eqref{eq:trajectory-moment-estimates} and~\eqref{eq:rho-hat}, and calculate $\widehat{\Fbb}[n]$ from equation~\eqref{eq:data-driven-fano}. No analytical expression for $\rho(h)$ is supplied to this calculation. For comparison, the simultaneously simulated copy-number trajectory gives the direct estimate $\widehat{\Fbb}^{\mathrm{direct}}[n]\mathrel{:=}\widehat{\Vbb}[n]/\widehat{\Ebb}[n]$. Both estimates are compared with the OU benchmark in equation~\eqref{eq:ou-fano}.

For $100$ independent realizations of duration $T=10^4/\mu$, the data-driven estimate is $\widehat{\Fbb}[n]=1.1337\pm0.0003$, whereas the direct estimate is $\widehat{\Fbb}^{\mathrm{direct}}[n]=1.1360\pm0.0019$. The uncertainties are standard errors across realizations. Both estimates agree with the OU benchmark $17/15$ to within two standard errors (Fig.~\ref{fig:data-driven-validation}). The estimated autocorrelation agrees with $e^{-\gamma|h|}$ over its principal decay range, with finite-sample deviations after the correlation has become small. Thus the sampled rate trajectory recovers the analytical benchmark without assuming the functional form of $\rho(h)$. The same procedure applies when a measured transcription-rate trajectory has no closed-form autocorrelation.

\begin{figure*}[!tbp]
  \centering
  \figurepanel{0.48\textwidth}{a}{Estimated autocorrelation}%
    {%
\begin{tikzpicture}
\begin{axis}[
  publication axis,
  width=8.5cm,
  height=5.0cm,
  xlabel={lag $h$ (units of $1/\mu$)},
  ylabel={$\rho(h)$},
  xmin=0, xmax=10,
  ymin=-0.05, ymax=1.05,
  xtick={0,2,4,6,8,10},
  ytick={0,0.25,0.5,0.75,1},
  xmajorgrids=true,
  legend style={
    at={(0.5,1.02)},
    anchor=south,
    legend columns=-1,
    draw=none,
    fill=none,
    /tikz/every even column/.append style={column sep=5pt}
  }
]
\addplot[black, very thick, no marks] coordinates {(0,1) (0.0558659218,0.972453557) (0.111731844,0.945665921) (0.167597765,0.919616188) (0.223463687,0.894284033) (0.279329609,0.869649689) (0.335195531,0.845693934) (0.391061453,0.822398074) (0.446927374,0.799743932) (0.502793296,0.777713832) (0.558659218,0.756290582) (0.61452514,0.735457467) (0.670391061,0.715198229) (0.726256983,0.695497062) (0.782122905,0.676338592) (0.837988827,0.65770787) (0.893854749,0.639590357) (0.94972067,0.621971918) (1.00558659,0.604838804) (1.06145251,0.588177646) (1.11731844,0.571975444) (1.17318436,0.556219555) (1.22905028,0.540897685) (1.2849162,0.525997878) (1.34078212,0.511508507) (1.39664804,0.497418267) (1.45251397,0.483716164) (1.50837989,0.470391504) (1.56424581,0.457433891) (1.62011173,0.444833215) (1.67597765,0.432579642) (1.73184358,0.420663611) (1.7877095,0.409075825) (1.84357542,0.397807241) (1.89944134,0.386849067) (1.95530726,0.376192751) (2.01117318,0.365829979) (2.06703911,0.355752664) (2.12290503,0.345952944) (2.17877095,0.336423171) (2.23463687,0.327155909) (2.29050279,0.318143927) (2.34636872,0.309380194) (2.40223464,0.30085787) (2.45810056,0.292570306) (2.51396648,0.284511035) (2.5698324,0.276673768) (2.62569832,0.269052389) (2.68156425,0.261640953) (2.73743017,0.254433676) (2.79329609,0.247424933) (2.84916201,0.240609256) (2.90502793,0.233981327) (2.96089385,0.227535974) (3.01675978,0.221268167) (3.0726257,0.215173016) (3.12849162,0.209245765) (3.18435754,0.203481788) (3.24022346,0.197876589) (3.29608939,0.192425793) (3.35195531,0.187125146) (3.40782123,0.181970514) (3.46368715,0.176957874) (3.51955307,0.172083314) (3.57541899,0.167343031) (3.63128492,0.162733325) (3.68715084,0.158250601) (3.74301676,0.15389136) (3.79888268,0.1496522) (3.8547486,0.145529815) (3.91061453,0.141520986) (3.96648045,0.137622586) (4.02234637,0.133831573) (4.07821229,0.13014499) (4.13407821,0.126559958) (4.18994413,0.123073681) (4.24581006,0.119683439) (4.30167598,0.116386586) (4.3575419,0.11318055) (4.41340782,0.110062828) (4.46927374,0.107030989) (4.52513966,0.104082666) (4.58100559,0.101215559) (4.63687151,0.0984274299) (4.69273743,0.0957161044) (4.74860335,0.0930794661) (4.80446927,0.0905154579) (4.8603352,0.088022079) (4.91620112,0.0855973839) (4.97206704,0.0832394804) (5.02793296,0.0809465288) (5.08379888,0.0787167399) (5.1396648,0.0765483737) (5.19553073,0.0744397383) (5.25139665,0.0723891883) (5.30726257,0.0703951236) (5.36312849,0.0684559884) (5.41899441,0.0665702694) (5.47486034,0.0647364953) (5.53072626,0.0629532351) (5.58659218,0.0612190974) (5.6424581,0.059532729) (5.69832402,0.0578928141) (5.75418994,0.056298073) (5.81005587,0.0547472613) (5.86592179,0.053239169) (5.92178771,0.0517726193) (5.97765363,0.0503464678) (6.03351955,0.0489596017) (6.08938547,0.0476109388) (6.1452514,0.0462994268) (6.20111732,0.0450240423) (6.25698324,0.0437837901) (6.31284916,0.0425777024) (6.36871508,0.0414048381) (6.42458101,0.0402642821) (6.48044693,0.0391551444) (6.53631285,0.0380765594) (6.59217877,0.0370276856) (6.64804469,0.0360077046) (6.70391061,0.0350158204) (6.75977654,0.0340512591) (6.81564246,0.0331132681) (6.87150838,0.0322011153) (6.9273743,0.0313140891) (6.98324022,0.0304514974) (7.03910615,0.0296126669) (7.09497207,0.0287969433) (7.15083799,0.0280036899) (7.20670391,0.0272322879) (7.26256983,0.0264821352) (7.31843575,0.0257526466) (7.37430168,0.0250432528) (7.4301676,0.0243534002) (7.48603352,0.0236825507) (7.54189944,0.0230301807) (7.59776536,0.0223957811) (7.65363128,0.021778857) (7.70949721,0.021178927) (7.76536313,0.0205955228) (7.82122905,0.0200281895) (7.87709497,0.0194764841) (7.93296089,0.0189399762) (7.98882682,0.0184182472) (8.04469274,0.01791089) (8.10055866,0.0174175087) (8.15642458,0.0169377183) (8.2122905,0.0164711444) (8.26815642,0.016017423) (8.32402235,0.0155762) (8.37988827,0.0151471311) (8.43575419,0.0147298815) (8.49162011,0.0143241256) (8.54748603,0.0139295469) (8.60335196,0.0135458375) (8.65921788,0.0131726978) (8.7150838,0.0128098368) (8.77094972,0.0124569714) (8.82681564,0.0121138262) (8.88268156,0.0117801333) (8.93854749,0.0114556326) (8.99441341,0.0111400706) (9.05027933,0.0108332013) (9.10614525,0.0105347852) (9.16201117,0.0102445893) (9.21787709,0.0099623873) (9.27374302,0.00968795897) (9.32960894,0.00942109016) (9.38547486,0.00916157263) (9.44134078,0.0089092039) (9.4972067,0.00866378702) (9.55307263,0.0084251305) (9.60893855,0.00819304813) (9.66480447,0.00796735879) (9.72067039,0.0077478864) (9.77653631,0.00753445969) (9.83240223,0.00732691212) (9.88826816,0.00712508176) (9.94413408,0.0069288111) (10,0.006737947)};
\addlegendentry{Exact $e^{-\gamma |h|}$}
\addplot[only marks, mark=*, mark size=1.15pt, plotblue2, draw=plotblue2] coordinates {(0,1) (0.2,0.903715214) (0.4,0.815508421) (0.6,0.735443775) (0.8,0.664038203) (1,0.599607743) (1.2,0.542131861) (1.4,0.489852842) (1.6,0.443118976) (1.8,0.401338356) (2,0.364365899) (2.2,0.33183392) (2.4,0.303063562) (2.6,0.277461825) (2.8,0.255272214) (3,0.235003554) (3.2,0.217019398) (3.4,0.201338377) (3.6,0.185389638) (3.8,0.170939252) (4,0.157139687) (4.2,0.143816811) (4.4,0.130732872) (4.6,0.118517724) (4.8,0.106849647) (5,0.095869012) (5.2,0.0863897191) (5.4,0.0790411494) (5.6,0.0727464862) (5.8,0.0675546836) (6,0.0631935119) (6.2,0.0595536482) (6.4,0.0565164068) (6.6,0.0534627632) (6.8,0.0505407806) (7,0.0476330368) (7.2,0.0456507929) (7.4,0.0442346225) (7.6,0.0424254989) (7.8,0.0413317355) (8,0.0399321386) (8.2,0.0385378493) (8.4,0.0365583485) (8.6,0.0340875011) (8.8,0.0321444847) (9,0.0306092779) (9.2,0.0286835862) (9.4,0.0270101701) (9.6,0.0247334106) (9.8,0.0213642932)};
\addlegendentry{Estimated $\widehat{\rho}(h)$}
\addplot+[black!45, thin, no marks] coordinates {(0,0) (10,0)};
\end{axis}
\end{tikzpicture}
}
  \hfill
  \figurepanel{0.48\textwidth}{b}{Data-driven and direct Fano estimates}%
    {%
\begin{tikzpicture}
\begin{axis}[
  publication axis,
  width=8.5cm,
  height=5.0cm,
  xlabel={Fano factor $\Fbb[n]$},
  ylabel={Count (100 realizations)},
  xmin=1.07597495, xmax=1.19327247,
  ymin=0, ymax=72,
  legend style={
    at={(0.5,1.02)},
    anchor=south,
    legend columns=-1,
    draw=none,
    fill=none,
    /tikz/every even column/.append style={column sep=4pt}
  }
]
\addplot[forget plot, fill=plotred, fill opacity=0.48, draw=plotred!70!black, line width=0.35pt, mark=none, ybar interval] coordinates {(1.08497495,1) (1.09081598,1) (1.09665701,2) (1.10249805,2) (1.10833908,6) (1.11418011,10) (1.12002114,7) (1.12586217,11) (1.1317032,13) (1.13754423,11) (1.14338526,11) (1.14922629,12) (1.15506732,5) (1.16090835,3) (1.16674938,2) (1.17259041,1) (1.17843144,2) (1.18427247,0)};
\addlegendimage{area legend, fill=plotred, fill opacity=0.48, draw=plotred!70!black, line width=0.35pt}
\addlegendentry{Direct}
\addplot[forget plot, fill=plotblue2, fill opacity=0.48, draw=plotblue3, line width=0.35pt, mark=none, ybar interval] coordinates {(1.08497495,0) (1.09081598,0) (1.09665701,0) (1.10249805,0) (1.10833908,0) (1.11418011,0) (1.12002114,0) (1.12586217,26) (1.1317032,61) (1.13754423,13) (1.14338526,0) (1.14922629,0) (1.15506732,0) (1.16090835,0) (1.16674938,0) (1.17259041,0) (1.17843144,0) (1.18427247,0)};
\addlegendimage{area legend, fill=plotblue2, fill opacity=0.48, draw=plotblue3, line width=0.35pt}
\addlegendentry{Data-driven}
\addplot[black, densely dashed, very thick, no marks] coordinates {(1.13333333,0) (1.13333333,68.93)};
\addlegendentry{OU benchmark $\Fbb[n]=1.133$}
\end{axis}
\end{tikzpicture}
}
  \caption{\textbf{Validation of the data-driven recipe using a stationary Ornstein--Uhlenbeck benchmark.} The parameters are $\lambda_0=5$, $\sigma=1$, $\gamma=0.5$, and $\mu=1$, giving $\Ebb[n]=5$ and $\Fbb^{\mathrm{OU}}[n]=17/15$ from equation~\eqref{eq:ou-fano}. \textbf{(a)} Autocorrelation estimated from one sampled rate trajectory (blue points) and the exact OU result $\rho(h)=e^{-\gamma|h|}$ (black line). The lag $h$ is measured in units of the mRNA lifetime $1/\mu$; deviations at large $h$ occur after the correlation has become small. \textbf{(b)} Distributions of the data-driven estimate (blue) and the direct copy-number estimate (red) across $100$ independent realizations of duration $T=10^4/\mu$, sampled at intervals $\Delta t=0.1/\mu$. The dashed line marks the OU benchmark. The data-driven estimate has the smaller run-to-run variation.}
  \label{fig:data-driven-validation}
\end{figure*}

We next consider models in which the dynamics of $\lambda(t)$ are specified microscopically and the Fano factor can be derived analytically.

\section{Exact Analytic Results: Discrete Production with an \texorpdfstring{$\mathrm{M/M/1}$}{M/M/1} Queue}
\label{sec:mm1-queue}
We first consider exponential mRNA degradation with a transcription rate governed by an $\mathrm{M/M/1}$ process. An $\mathrm{M/M/1}$ queue is a birth--death process with a constant arrival rate and a constant total service rate whenever the queue is not empty~\cite{ross2014introduction}. Here each ``customer'' represents one increment $\delta$ of transcription rate rather than one mRNA molecule. Let $m(t)$ denote the upstream state at time $t$. For every real $t$, $m(t)$ is a nonnegative integer. For a fixed rate increment $\delta>0$, assume
\begin{equation}
  \lambda(t) \mathrel{:=} \delta\,m(t).
  \label{eq:mm1-rate-state}
\end{equation}
Thus $m(t)$ is the number of transcription-rate increments present at time $t$, distinct from the mRNA copy number $n(t)$. The state $m(t)$ is upstream since it determines $\lambda(t)=\delta m(t)$; this rate then drives the downstream mRNA production process whose copy number is $n(t)$. 
\begin{samepage}
\noindent For a state $(n,m)$, the coupled process has the transitions
\begin{subequations}
  \label{eq:mm1-joint-transitions}
  \begin{align}
    (n,m) &\longrightarrow (n+1,m)
      && \text{at rate } \delta m,
      \label{eq:mm1-mrna-production}\\
    (n,m) &\longrightarrow (n-1,m)
      && \text{at rate } \mu n,
      && n>0,
      \label{eq:mm1-mrna-degradation}\\
    (n,m) &\longrightarrow (n,m+1)
      && \text{at rate } m^{\uparrow},
      \label{eq:mm1-rate-increase}\\
    (n,m) &\longrightarrow (n,m-1)
      && \text{at rate } m^{\downarrow},
      && m>0.
      \label{eq:mm1-rate-decrease}
  \end{align}
\end{subequations}
\end{samepage}
Panels~(a) and (b) of Fig.~\ref{fig:mm1-model} summarize the construction. The upstream state $m(t)$ sets the instantaneous mRNA production rate, while $n(t)$ records the downstream mRNA copy number. Horizontal moves in the joint state space change $n$, and vertical moves change $m$.

The two vertical transitions constitute the $\mathrm{M/M/1}$ process: an arrival adds one rate increment, and a service completion removes one. The single-server assumption appears in equation~\eqref{eq:mm1-rate-decrease} as the constant total downward rate $m^{\downarrow}$ for every $m>0$. Assume $0<m^{\uparrow}<m^{\downarrow}$ and define $r\mathrel{:=} m^{\uparrow}/m^{\downarrow}\in(0,1)$. The condition $r<1$ is the $\mathrm{M/M/1}$ stability condition.  It gives a stationary distribution and prevents $\lambda(t)$ from diverging. Note that at stationarity, $m$ denotes a random variable with the stationary distribution of $m(t)$.

For comparison with mRNA degradation, we define the queue relaxation time $\vartheta_p$, the degradation timescale $\vartheta_d$, and their ratio $k$ by 
\begin{equation}
  \begin{aligned}
    \vartheta_p
    &\mathrel{:=}
    \frac{1}{m^{\downarrow}-m^{\uparrow}}, \\
    \vartheta_d
    &\mathrel{:=}
    \frac{1}{\mu}, \\
    k
    &\mathrel{:=}
    \frac{\vartheta_p}{\vartheta_d}
    =\frac{\mu}{m^{\downarrow}-m^{\uparrow}}.
  \end{aligned}
  \label{eq:mm1-timescales}
\end{equation}
\noindent The relaxation time $\vartheta_p$ measures how quickly a perturbation of the queue state returns toward stationarity. 

\noindent The following result accounts for the complete $\mathrm{M/M/1}$ autocorrelation rather than approximating it by a single exponential. The derivation is given in Appendix~\ref{app:framework-proofs}.

\begin{proposition}[Fano factor for $\mathrm{M/M/1}$ production]
For exponential mRNA degradation and the stationary $\mathrm{M/M/1}$ transcription-rate process defined above, the exact Fano factor is
\begin{equation}
  \Fbb[n]
  = 1 + \Ebb[n]
  \left[
    \frac{1}{r}
    -\frac{(1-r)^2z}{r(1-rz)^2}
  \right],
  \label{eq:mm1-fano}
\end{equation}
where
\begin{equation}
  z
  \mathrel{:=}
  \frac{2}{a+\sqrt{a^2-4r}},
  \qquad
  a
  \mathrel{:=}
  1+r+k(1-r).
  \label{eq:mm1-auxiliaries}
\end{equation}
\end{proposition}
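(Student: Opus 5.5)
The plan is to apply equation~\eqref{eq:poisson-fano}, so we need $\Ebb[\lambda]$, $\Vbb[\lambda]$, and the Laplace transform $\int_0^\infty e^{-\mu h}\rho(h)\,dh$ for $\lambda=\delta m$.

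\textbf{Stationary moments.} The stationary law of the $\mathrm{M/M/1}$ state is geometric, $\Pbb(m=j)=(1-r)r^j$. Hence
\begin{equation*}
  \Ebb[m]=\frac{r}{1-r},\qquad \Vbb[m]=\frac{r}{(1-r)^2},
\end{equation*}
so that $\Vbb[\lambda]/\Ebb[\lambda]=\delta/(1-r)$ and $\Ebb[n]=\delta r/[\mu(1-r)]$. Substituting into equation~\eqref{eq:poisson-fano} gives
\begin{equation*}
  \Fbb[n]=1+\frac{\Vbb[\lambda]}{\mu\,\Ebb[n]}\,I,
  \qquad
  I:=\int_0^\infty e^{-\mu h}\rho(h)\,dh .
\end{equation*}
It therefore suffices to show that $\mu I\,\Vbb[\lambda]/\Ebb[n]^2$ equals the bracket in equation~\eqref{eq:mm1-fano}.

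\textbf{Reduction to a resolvent.} Rather than computing the full $\mathrm{M/M/1}$ transient autocorrelation, I would use the resolvent. Define
\begin{equation*}
  g(j):=\int_0^\infty e^{-\mu h}\,\Ebb[m(h)\mid m(0)=j]\,dh .
\end{equation*}
Then, by stationarity,
\begin{equation*}
  \Vbb[m]\,I=\Ebb\bigl[(m-\Ebb[m])\,g(m)\bigr],
\end{equation*}
since the constant $\Ebb[m]/\mu$ drops out against the centered factor. The function $g$ solves $(\mu-\mathcal{L})g=f$ with $f(j)=j$, where $\mathcal{L}$ is the birth--death generator. Explicitly,
\begin{equation*}
  \mu g(j)-m^{\uparrow}\bigl[g(j+1)-g(j)\bigr]-m^{\downarrow}\bigl[g(j-1)-g(j)\bigr]=j,\qquad j\ge1,
\end{equation*}
together with the boundary equation $\mu g(0)-m^{\uparrow}\bigl[g(1)-g(0)\bigr]=0$.

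\textbf{Solving the recursion.} An affine particular solution is
\begin{equation*}
  g_p(j)=\frac{j}{\mu}-\frac{m^{\downarrow}-m^{\uparrow}}{\mu^2}.
\end{equation*}
Homogeneous solutions $x^j$ satisfy $m^{\uparrow}x^2-(\mu+m^{\uparrow}+m^{\downarrow})x+m^{\downarrow}=0$. Dividing by $m^{\downarrow}$ and using $\mu/m^{\downarrow}=k(1-r)$ gives
\begin{equation*}
  r x^2-a x+1=0 .
\end{equation*}
Boundedness of $g$ (it is at most linear, since $\Ebb[m(h)\mid j]\le j+\text{const}$) selects the smaller root, $x=\bigl(a-\sqrt{a^2-4r}\bigr)/(2r)=z$ from equation~\eqref{eq:mm1-auxiliaries}. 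Note $z\in(0,1)$ and $rz<1$. Thus $g(j)=g_p(j)+A z^j$, with $A$ fixed by the $j=0$ boundary equation. Using $rz^2-az+1=0$, this should simplify to $A\propto (m^{\downarrow}-m^{\uparrow})(1-z)/\mu^2$ or a similar closed form.

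\textbf{Assembling and the main obstacle.} The final step is to compute $\Ebb[(m-\Ebb m)g(m)]$ against the geometric law. The linear part contributes $\Vbb[m]/\mu$, which after normalization produces the $1/r$ term. The $z^j$ part uses
\begin{equation*}
  \sum_j (1-r)r^j z^j=\frac{1-r}{1-rz},
  \qquad
  \sum_j j(1-r)(rz)^j=\frac{(1-r)rz}{(1-rz)^2},
\end{equation*}
and the centered combination yields a factor $(1-r)\,r(z-1)/(1-rz)^2$. Combined with $A$, this should give the term $-(1-r)^2z/[r(1-rz)^2]$ after multiplying by $\delta^2$ and dividing by $\mu\,\Ebb[n]\,\delta r/(1-r)$. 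I expect the main obstacle to be this algebra: determining $A$ cleanly from the boundary condition and repeatedly eliminating $a$ via $rz^2-az+1=0$ and $\mu=m^{\downarrow}k(1-r)$. As a check, the limit $k\to0$ (so $z\to1$) should give bracket $\to1/r-1/r=0$ times $\Ebb[n]$ plus the correct slow ceiling, and $k\to\infty$ should give $\Fbb\to1$. I would verify both limits before finalizing the constants.
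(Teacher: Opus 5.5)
Your route is the same as the paper's. The paper also solves the backward equation for the $e^{-\mu h}$-discounted conditional mean; its $u_j$ is your $g(j)-q/\mu$, with $q:=\Ebb[m]=r/(1-r)$. It writes the solution as an affine part plus $z^j$, fixes the coefficient from the $j=0$ equation, and averages against the geometric law. As written, however, you stop before the step that carries the content, and the intermediate values you predict would not produce the stated formula.

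\textbf{(i) Normalization.} Since $\Fbb[n]-1=\Vbb[\lambda]I/(\mu\Ebb[n])$, the quantity to match with the bracket is $\Vbb[\lambda]I/(\mu\,\Ebb[n]^2)$, not $\mu I\,\Vbb[\lambda]/\Ebb[n]^2$.

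\textbf{(ii) The coefficient $A$.} Set $\varepsilon:=\mu/m^{\downarrow}=k(1-r)$. The $j=0$ equation then reads $(\varepsilon+r)g(0)-rg(1)=0$. Substituting $g(j)=j/\mu-(m^{\downarrow}-m^{\uparrow})/\mu^2+Az^j$ gives $A(\varepsilon+r-rz)=1/\mu$. The quadratic $rz^2-az+1=0$ is equivalent to $\varepsilon z=(1-z)(1-rz)$, so $\varepsilon+r-rz=(1-z)/z$. Hence $A=z/[\mu(1-z)]$, with $1-z$ in the denominator rather than the numerator as you guessed.

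\textbf{(iii) The centered geometric sum.} It carries no extra factor $1-r$:
\[
  \Ebb[(m-q)z^m]=-\frac{r(1-z)}{(1-rz)^2}.
\]
The factor $1-z$ then cancels against $A$, so
\[
  \Ebb\bigl[(m-q)g(m)\bigr]
  =\frac{1}{\mu}\left[\frac{r}{(1-r)^2}-\frac{rz}{(1-rz)^2}\right].
\]
Multiplying by $\delta^2/(\mu\Ebb[n])$ and using $\delta^2/\mu^2=\Ebb[n]^2(1-r)^2/r^2$ gives exactly the bracket. With your extra $1-r$, the last term would instead come out as $(1-r)^3z/[r(1-rz)^2]$.

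\textbf{(iv) The limit check is reversed.} $k=\mu/(m^{\downarrow}-m^{\uparrow})$ is large when the queue is slow. So $k\to0$ gives $z\to1$, a vanishing bracket, and $\Fbb[n]\to1$. Conversely, $k\to\infty$ gives $z\to0$ and $\Fbb[n]\to1+\Ebb[n]/r=1+\Vbb[\lambda]/(\mu\Ebb[\lambda])$. Testing for $\Fbb[n]\to1$ as $k\to\infty$ would wrongly flag the correct formula.

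Finally, the smaller root is selected because $g$ grows at most linearly, which excludes $(rz)^{-j}$. It is not selected by boundedness, since $g$ itself grows linearly.
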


The term following $1$ in equation~\eqref{eq:mm1-fano} is the additional copy-number variation transmitted from fluctuations in $\lambda(t)$. For fixed $r$ and $\Ebb[n]$, the Fano factor increases monotonically with $k$ and satisfies
\begin{subequations}
  \label{eq:mm1-fano-bounds}
  \begin{align}
    \lim_{k\to0}\Fbb[n]
    &=1,
    \label{eq:mm1-fast-limit}\\
    \lim_{k\to\infty}\Fbb[n]
    &=1+\frac{\Ebb[n]}{r}.
    \label{eq:mm1-slow-limit}
  \end{align}
\end{subequations}
\noindent When $k\ll1$, the queue fluctuates rapidly relative to mRNA degradation and its contribution is averaged out. When $k\gg1$, the transcription rate is effectively constant over an mRNA lifetime, giving the slow-fluctuation ceiling.

The stationary mean copy number is
\begin{align}
  \Ebb[n] = \frac{\Ebb[\lambda]}{\mu} = \frac{r}{1-r}\,\frac{\delta}{\mu}\,.
  \label{eq:mm1-mean-copy-number}
\end{align}
At fixed $r$ and $\mu$, varying $\delta$ changes $\Ebb[n]$ without changing $k$, whereas scaling both queue-transition rates by the same factor changes $k$ without changing $r$. For fixed $r$ and $k$, the Fano factor in equation~\eqref{eq:mm1-fano} depends linearly on $\Ebb[n]$, with intercept $1$.  Figure~\ref{fig:mm1-model}(c) compares this prediction with direct simulation.

\begin{figure*}[!tbp]
  \centering
  \figurepanel{0.43\textwidth}{a}{Coupled stochastic layers}%
    {%

}
  \hfill
  \figurepanel{0.20\textwidth}{b}{Joint state space}%
    {%
%
}
  \caption{\textbf{Discrete $\mathrm{M/M/1}$ transcription-rate model and its Fano factor.} \textbf{(a)} The upstream state $m(t)$ changes by one at rates $m^{\uparrow}$ and $m^{\downarrow}$ and sets $\lambda(t)=\delta m(t)$. Conditional on this rate, the downstream copy number increases at rate $\lambda(t)$ and decreases at total rate $\mu n$. \textbf{(b)} The same four transitions from a state $(n,m)$. The leftward transition is absent at $n=0$, and the downward transition is absent at $m=0$. \textbf{(c)} Equation~\eqref{eq:mm1-fano} for $k=0.1$, $1$, and $10$, at fixed $r=m^{\uparrow}/m^{\downarrow}=0.9$ and $\mu=2$. The grey dashed line marks $\Fbb[n]=1$. Red points are simulations at $k=1$, obtained by varying $\delta$, and agree with the exact curve.}
  \label{fig:mm1-model}
\end{figure*}

We implement a continuous-time simulation of the joint Markov process for the pair $(n,m)$ on the discrete state space introduced above. The four elementary transitions are mRNA production ($n\!\to\!n+1$, rate $\lambda\mathrel{:=}\delta\,m$), mRNA degradation ($n\!\to\!n-1$, rate $\mu n$), and increments and decrements of the production rate ($m\!\to\!m\pm1$, rates $m^{\uparrow}$ and $m^{\downarrow}$).  At each step, we draw a random waiting time for each transition from its exponential distribution, carry out whichever one would occur soonest, and advance the clock to that time; we run $2\times10^{7}$ such steps per simulation. The Fano factor is computed from the mean and variance of $n$ along the trajectory, with each inter-event value weighted by the duration for which that value persists. Fixing $\mu=2$, $r=0.9$, and $k=1$ (so that $m^{\downarrow}=20$ and $m^{\uparrow}=18$), we sweep the increment size $\delta$ to vary $\Ebb[n]$ over $[0.1,50]$ and find excellent agreement between the simulated Fano factor and the $k=1$ exact curve [equation~\eqref{eq:mm1-fano}]. At small mean copy number the extrinsic term is small and the Fano factor approaches the Poisson value $\Fbb[n]\simeq1$; at larger mean copy number the extrinsic contribution grows linearly. Details regarding the simulation are provided in Appendix~\ref{app:simulation-details}.

\section{Exact Analytic Results: Continuous Production with Drift--Diffusion}
\label{sec:drift-diffusion}
Next, we consider Poisson degradation with a continuous transcription rate $\lambda(t)$. Let $p(\lambda,t\mid\lambda_0,t_0)$ denote the conditional probability density of $\lambda(t)$ given $\lambda(t_0)=\lambda_0$. We model its evolution by the drift--diffusion equation
\begin{equation}
  \begin{aligned}
    \label{eq:smoluchowski}
    \frac{\partial}{\partial t}p(\lambda,t\mid\lambda_0,t_0)
    &=D\frac{\partial^2}{\partial\lambda^2}p(\lambda,t\mid\lambda_0,t_0) \\
    &+v\frac{\partial}{\partial\lambda}p(\lambda,t\mid\lambda_0,t_0),
  \end{aligned}
\end{equation}
where $D>0$ is the diffusion constant and $v$ is a signed drift parameter. Equation~\eqref{eq:smoluchowski} is equivalent to
\begin{equation}
  d\lambda=-v\,dt+\sqrt{2D}\,dW.
  \label{eq:drift-diffusion-langevin}
\end{equation}
Equation~\eqref{eq:smoluchowski} describes the evolution of the probability density, whereas equation~\eqref{eq:drift-diffusion-langevin} describes individual stochastic trajectories. They are equivalent descriptions of the same drift--diffusion process; we use the density equation analytically and the Langevin equation in simulations. Thus the physical drift velocity is $-v$: positive $v$ drives $\lambda(t)$ downward, whereas negative $v$ drives it upward. The permitted values of $\lambda(t)$ and its behavior at the endpoints are determined by the boundary conditions considered below. Figure~\ref{fig:reflecting-model}(a) provides a schematic of the drift--diffusion process and boundary conditions. 

\subsection{Reflecting Boundary Condition}
We first take $v>0$, so the physical drift $-v$ points toward zero. We restrict the transcription rate to nonnegative values by imposing a reflecting boundary at $\lambda=0$.  Suppressing the arguments of $p$, the corresponding zero-flux condition is
\begin{equation}
  \left.(D\,\partial_\lambda+v)p\right|_{\lambda=0}=0.
  \label{eq:reflecting-bc}
\end{equation}
Figure~\ref{fig:reflecting-model}(a) summarizes the drift convention and the reflecting boundary.

For the elapsed time $\Delta t\mathrel{:=} t-t_0>0$, the conditional probability density is~\cite{risken1989fokker,redner2001guide}
\begin{align}
  p
  &=\frac{1}{\sqrt{4\pi D\Delta t}}\left[
    e^{-\frac{(\lambda-\lambda_0+v\Delta t)^2}{4D\Delta t}}
    +e^{\frac{v\lambda_0}{D}-\frac{(\lambda+\lambda_0+v\Delta t)^2}{4D\Delta t}}
  \right] \notag\\
  &+\frac{v}{2D}e^{-\frac{v\lambda}{D}}
    \,\mathrm{erfc}\!\left[
      \frac{\lambda+\lambda_0-v\Delta t}{\sqrt{4D\Delta t}}
    \right].
  \label{eq:reflecting-propagator}
\end{align}
The stationary production rate has density $p_s(\lambda)=(v/D)e^{-v\lambda/D}$ for $\lambda\geq0$.  Therefore $\Ebb[\lambda]=D/v$, $\Vbb[\lambda]=D^2/v^2$, and
$\Ebb[n]=\Ebb[\lambda]/\mu=D/(\mu v)$. Let us define the characteristic production timescale $\vartheta_p\mathrel{:=} D/v^2$, the degradation timescale $\vartheta_d\mathrel{:=}1/\mu$, and their ratio $k\mathrel{:=}\vartheta_p/\vartheta_d=\mu D/v^2$. Substitution into equation~\eqref{eq:poisson-fano} gives
\begin{align}
  \label{eq:reflecting-fano}
  \Fbb[n] = 1 + \Ebb[n] \left[\frac{\sqrt{1 + 4 k}-1}{2k^2} - \frac{1}{k} + 1\right].
\end{align}
A derivation is given in Appendix~\ref{app:reflecting-derivation}.

The bracketed factor in equation~\eqref{eq:reflecting-fano} increases monotonically with $k$. When $k\to\infty$, the production rate varies slowly relative to mRNA degradation and is effectively constant over an mRNA lifetime. Since the stationary rate distribution satisfies $\Vbb[\lambda]/\Ebb[\lambda]^2=1$, the Fano factor approaches
\begin{align}
  \Fbb[n]\big|_{k \rightarrow \infty} = 1 + \Ebb[n].
\end{align}
When $k\to0$, the production rate fluctuates rapidly relative to mRNA degradation and its contribution is averaged out, giving
\begin{align}
  \Fbb[n]\big|_{k \rightarrow 0} = 1.
\end{align}
Thus equation~\eqref{eq:reflecting-fano} interpolates between the Poisson floor and the slow-fluctuation ceiling $1+\Ebb[n]$. 

\begin{figure*}[!tbp]
  \centering
  \figurepanel{0.44\textwidth}{a}{Reflecting rate process}%
    {%

}
  \caption{\textbf{Reflecting drift--diffusion transcription-rate model and its Fano factor.} \textbf{(a)} Diffusion with constant $D$ produces fluctuations in $\lambda(t)$, while the physical drift $-v$ points toward the origin for $v>0$. The zero-flux condition at $\lambda=0$ confines the process to nonnegative rates. \textbf{(b)} Solid curves show equation~\eqref{eq:reflecting-fano} for $k=0.5$, $2$, and $10$; the grey dashed line marks $\Fbb[n]=1$. As $k$ increases, the curves approach the slow-fluctuation ceiling $\Fbb[n]=1+\Ebb[n]$. Orange points are simulations at $k=2$ and $\mu=2$, obtained by varying $D$ and $v$ at fixed $k$.}
  \label{fig:reflecting-model}
\end{figure*}

To test equation~\eqref{eq:reflecting-fano}, we simulate the Langevin process~\eqref{eq:drift-diffusion-langevin} with a reflecting boundary at $\lambda=0$, implemented after each time step by replacing a proposed negative value with its absolute value.  The rate process is simulated jointly with the mRNA birth--death process. We fix $\mu=2$ and $k=2$. For each target mean, we set $v=\mu^2\Ebb[n]/k$ and $D=\mu^3\Ebb[n]^2/k$; this varies $\Ebb[n]=D/(\mu v)$ while keeping $\vartheta_p=D/v^2=k/\mu$ fixed. The resulting Fano factors agree with equation~\eqref{eq:reflecting-fano}, as shown in Fig.~\ref{fig:reflecting-model}(b). Simulation details are given in Appendix~\ref{app:simulation-details}.

The reflecting drift--diffusion process is the continuum limit of the $\mathrm{M/M/1}$ rate model. As $\delta\to0$, we scale the upward and downward transition rates so that $v\mathrel{:=}(m^{\downarrow}-m^{\uparrow})\delta$ and $D\mathrel{:=}\tfrac12(m^{\uparrow}+m^{\downarrow})\delta^2$ remain fixed. The discrete process then converges to reflected Brownian motion with physical drift $-v$ and diffusion constant $D$. Because $r<1$ means $m^{\uparrow}<m^{\downarrow}$ and $\delta>0$, the scaling relation $v=(m^{\downarrow}-m^{\uparrow})\delta$ gives $v>0$.  Note that the discrete and continuous Fano factors have the same limits as $k\to0$ and $k\to\infty$, but differ at intermediate $k$ because the $\mathrm{M/M/1}$ production rate changes in finite increments.

\subsection{Periodic Boundary Conditions}
\label{sec:periodic-boundary}
During the cell cycle, cellular state and gene dosage change systematically with cell age, and division begins a new cycle. This motivates transcription-rate dynamics that repeatedly cycle through the same range~\cite{padovan2015single,golding2024primer}. We first consider a tractable periodic model: confine $\lambda(t)$ to $[0,L]$ and impose periodic wrapping. A trajectory that crosses $L$ re-enters at $0$, while one that crosses $0$ re-enters at $L$; the interval therefore behaves like a ring. This periodic model is a tractable mathematical benchmark rather than a literal model of cell division. It permits wrapping in both directions, whereas biological division is directional. The one-way reset model is introduced in Subsection~\ref{sec:first-passage-resetting}. The periodic boundary conditions are
\begin{align}
  p(0,t\,|\,\lambda_0,t_0)
  &=p(L,t\,|\,\lambda_0,t_0),\notag \\
  \partial_\lambda p(0,t\,|\,\lambda_0,t_0)
  &=\partial_\lambda p(L,t\,|\,\lambda_0,t_0).
  \label{eq:periodic-bc}
\end{align}
The stationary density is uniform, $p_s(\lambda)=1/L$. Therefore $\Ebb[\lambda]=L/2$ and $\Vbb[\lambda]=L^2/12$. The stationary covariance
$C(h)\mathrel{:=}\Ebb[\lambda(t+h)\lambda(t)]-\Ebb[\lambda]^2$ is
\begin{align}
  C(h)=\frac{L^2}{2\pi^2}
  \sum_{\ell=1}^{\infty}\frac{1}{\ell^2}\,
  e^{-4\pi^2 \ell^2 D |h|/L^2}\cos\!\left(\frac{2\pi \ell v h}{L}\right),
  \label{eq:periodic-autocorr}
\end{align}
as derived in Appendix~\ref{app:periodic-derivation}. Let us define the slowest diffusive relaxation time $\vartheta_p$ and the circulation frequency $\omega$ by
\begin{align}
  \vartheta_p&\mathrel{:=}\frac{L^2}{4\pi^2D},
  \qquad
  \omega\mathrel{:=}\frac{2\pi v}{L}.
  \label{eq:periodic-scales}
\end{align}
For $h\geq0$, mode $\ell$ in equation~\eqref{eq:periodic-autocorr} decays as $e^{-\ell^2h/\vartheta_p}$ and oscillates as $\cos(\ell\omega h)$. The mode $\ell=1$ is the slowest, and the dimensionless product $\vartheta_p\omega=vL/(2\pi D)$ measures circulation relative to diffusive relaxation.

For exponential degradation, define $\vartheta_d\mathrel{:=}1/\mu$ and $k\mathrel{:=}\vartheta_p/\vartheta_d=\mu\vartheta_p$. The stationary mean copy number is $\Ebb[n]=L/(2\mu)$. Substituting equation~\eqref{eq:periodic-autocorr} into equation~\eqref{eq:poisson-fano} gives the exact Fano factor
\begin{align}
  \Fbb[n]
  &=1 \notag\\
  &+\Ebb[n]\,\frac{2k}{\pi^2}
  \sum_{\ell=1}^{\infty}
  \frac{k+\ell^2}{\ell^2\left[(k+\ell^2)^2+\ell^2\vartheta_p^2\omega^2\right]}.
  \label{eq:periodic-fano}
\end{align}
For a simpler approximation, we replace the full covariance by $(L^2/12)e^{-|h|/\vartheta_p}\cos(\omega h)$. This approximation retains the decay rate and frequency of the $\ell=1$ Fourier mode, omits the faster modes, and assigns the retained mode the complete stationary variance $L^2/12$, so that it preserves the exact value $C(0)=\Vbb[\lambda]$.  Equation~\eqref{eq:poisson-fano} then gives
\begin{align}
  \Fbb[n]
  &\approx1+\frac{\Ebb[n]}{3}
  \frac{k(k+1)}{(k+1)^2+\vartheta_p^2\omega^2}.
  \label{eq:periodic-fano-single}
\end{align}

When $k\to\infty$, the production rate changes slowly relative to mRNA degradation and the Fano factor approaches the slow-fluctuation ceiling $1+\Ebb[n]/3$. At stationarity, $\lambda$ is uniformly distributed on $[0,L]$, and therefore
\begin{equation*}
  \frac{\Vbb[\lambda]}{\Ebb[\lambda]^2}
  =\frac{L^2/12}{(L/2)^2}
  =\frac{1}{3}.
\end{equation*}
\noindent When $k\to0$, rapid rate fluctuations are averaged out and $\Fbb[n]\to1$. At fixed $k$, increasing $|\vartheta_p\omega|$ also suppresses the transmitted rate fluctuations because the oscillatory parts of $C(h)$ cancel in the time integral.

\begin{figure*}[!tbp]
  \centering
  \figurepanel{0.48\textwidth}{a}{No drift: $\vartheta_p\omega=0$}%
    {%

}
  \caption{\textbf{Fano factor for periodic drift--diffusion production.} Solid curves show the exact result in equation~\eqref{eq:periodic-fano}; the grey dashed line marks the Poisson value $\Fbb[n]=1$. \textbf{(a)} Without drift, increasing $k$ moves the curves toward the slow-fluctuation ceiling $1+\Ebb[n]/3$ (dark dashed). \textbf{(b)} For $\vartheta_p\omega=0.6\Ebb[n]$, circulation becomes faster as the mean increases. It eventually averages out the rate fluctuations, causing each curve to turn downward. Green points show stochastic simulations at $k=10$ and $\mu=1$. They agree with the corresponding exact curves in both panels.}
  \label{fig:periodic-fano}
\end{figure*}

To test equation~\eqref{eq:periodic-fano}, we simulate equation~\eqref{eq:drift-diffusion-langevin} with periodic wrapping, jointly with the mRNA birth--death process. We fix $\mu=1$ and $k=10$, vary $L$, and set $D=\mu L^2/(4\pi^2k)$. Panel (a) uses $v=0$. Panel (b) uses $v=0.6\pi D/\mu$, which gives $\vartheta_p\omega=0.6\Ebb[n]$. Simulation details are given in Appendix~\ref{app:simulation-details}.

\subsection{First-Passage Resetting}
\label{sec:first-passage-resetting}
We now replace periodic wrapping by a one-way reset. Let $L\mathrel{:=}\lambda_{\max}-\lambda_{\min}$. Between resets, $\lambda(t)$ follows equation~\eqref{eq:drift-diffusion-langevin} on $[\lambda_{\min},\lambda_{\max}]$ and reflects at $\lambda_{\min}$. When it first reaches $\lambda_{\max}$, it is reset to $\lambda_{\min}$ and a new cycle begins. This model describes a transcription rate that builds up and then drops abruptly, as may occur across cell division~\cite{padovan2015single}. The physical drift is upward when $v<0$, because the drift velocity in equation~\eqref{eq:drift-diffusion-langevin} is $-v$. Before the upper boundary is reached, the single-cycle propagator obeys the reflecting and absorbing boundary conditions
\begin{align}
  D\partial_\lambda p+vp&=0
  \;\text{at }\lambda=\lambda_{\min},
  \quad
  p(\lambda_{\max},t)&=0,
  \label{eq:fpr-bc}
\end{align}
The full process is obtained by reinjecting the probability absorbed at $\lambda_{\max}$ at $\lambda_{\min}$.

Define the dimensionless position $\xi\mathrel{:=}(\lambda-\lambda_{\min})/L$, the dimensionless drift $\alpha\mathrel{:=} vL/D$, and $\psi(\alpha)\mathrel{:=}(e^\alpha-1)/\alpha-1$. A stationary flux-balance calculation gives
\begin{align}
  p_s(\lambda)
  =\frac{e^{\alpha(1-\xi)}-1}{L\,\psi(\alpha)},
  \label{eq:fpr-steady}
\end{align}
as derived in Appendix~\ref{app:fpr-derivation}. For pure diffusion, $\alpha\to0$ and $p_s(\lambda)\to2(1-\xi)/L$. Under strong upward drift, $\alpha\to-\infty$ and $p_s(\lambda)\to1/L$. The first two moments of $\xi$ are
\begin{align}
  \Ebb[\xi]&=\frac{1}{\alpha}-\frac{1}{2\psi(\alpha)},\\
  \Vbb[\xi]&=\frac{1}{\alpha^2}
  -\frac{1}{3\psi(\alpha)}
  -\frac{1}{4\psi(\alpha)^2},
  \label{eq:fpr-moments}
\end{align}
and therefore
\begin{align*}
  \Ebb[\lambda]&=\lambda_{\min}+L\Ebb[\xi],\\
  \Vbb[\lambda]&=L^2\Vbb[\xi],\\
  \Ebb[n]&=\frac{\Ebb[\lambda]}{\mu}.
\end{align*}
Let $T_{\mathrm{cell}}$ denote the random time between consecutive resets. Its mean is
\begin{align}
  \Ebb[T_{\mathrm{cell}}]
  =\frac{L^2}{\alpha D}\,\psi(\alpha).
  \label{eq:fpr-mfpt}
\end{align}
For pure diffusion this becomes $L^2/(2D)$; under strong upward drift it approaches $L/|v|$.

A simple approximation replaces the stationary autocorrelation by one exponential. For $\alpha<2$, let $k_0$ be the smallest positive solution of $k_0\cot(k_0L)=v/(2D)$ and define $E_0\mathrel{:=} Dk_0^2+v^2/(4D)$, $\vartheta_\lambda\mathrel{:=}1/E_0$, and $k_\lambda\mathrel{:=}\mu\vartheta_\lambda$. Approximating $\rho(h)$ by $e^{-|h|/\vartheta_\lambda}$ in equation~\eqref{eq:poisson-fano} gives
\begin{align}
  \Fbb[n]
  \approx1+\Ebb[n]
  \frac{\Vbb[\lambda]}{\Ebb[\lambda]^2}
  \frac{k_\lambda}{k_\lambda+1}.
  \label{eq:fpr-fano}
\end{align}
This approximation neglects the faster modes and correlations across reset events. It is included for comparison in Fig.~\ref{fig:fpr-fano}, but the solid curves use the exact result below.

For the exact calculation, let us define $C(h)\mathrel{:=}\Ebb[\lambda(t+h)\lambda(t)]-\Ebb[\lambda]^2$ and $J(\mu)$ as its exponentially weighted integral. Equation~\eqref{eq:poisson-fano} then becomes
\begin{align}
  \Fbb[n]
  =1+\frac{J(\mu)}{\Ebb[\lambda]},
  \qquad
  J(\mu)\mathrel{:=}\int_0^\infty C(h)\,e^{-\mu h}\,d h.
  \label{eq:fpr-fano-j}
\end{align}
The integral can be evaluated without first deriving $C(h)$. 
Let $g(\lambda):=\lambda-\Ebb[\lambda]$. For an initial rate $\lambda$, we define
\begin{equation*}
  \chi(\lambda)
  :=
  \int_0^\infty e^{-\mu h}
  \Ebb\!\left[g(\lambda(h))\mid\lambda(0)=\lambda\right]\,d h.
\end{equation*}
Thus $\chi(\lambda)$ is the expected future deviation of the transcription rate from its stationary mean, with the deviation at lag $h$ weighted by the mRNA survival probability $e^{-\mu h}$. Equivalently, $\chi=(\mu-\mathcal{L})^{-1}g$ and $J(\mu)=\Ebb_{p_s}[g\chi]$. For the present drift--diffusion process, the corresponding boundary-value problem is
\begin{equation}
\begin{aligned}
  D\chi''-v\chi'-\mu\chi
  &=-(\lambda-\Ebb[\lambda]),\\
  \chi'(\lambda_{\min})&=0,\\
  \chi(\lambda_{\max})&=\chi(\lambda_{\min}).
\end{aligned}
  \label{eq:fpr-bvp}
\end{equation}
Averaging $\chi(\lambda)$ over the stationary density, weighted by the current deviation $g(\lambda)$, gives 
\begin{align*}
  J(\mu)
  =\int_{\lambda_{\min}}^{\lambda_{\max}}
  (\lambda-\Ebb[\lambda])\chi(\lambda)p_s(\lambda)\,d\lambda.
\end{align*}
\noindent
This boundary-value problem includes all relaxation modes and correlations across resets; its derivation is given in Appendix~\ref{app:resolvent}.

For pure diffusion with $\lambda_{\min}=0$, let us define $u\mathrel{:=} L\sqrt{\mu/D}$. Solving equation~\eqref{eq:fpr-bvp} gives
\begin{align}
  \Fbb[n]
  =1+\sqrt{\frac{D}{\mu^3}}
  &\left[
    \frac{u}{6}-\frac{6}{u}\coth^2\!\left(\frac{u}{2}\right)+
  \right.
  \notag\\
  &\left.
    \;\frac{24}{u^2}\coth\!\left(\frac{u}{2}\right)
    -\frac{24}{u^3}
  \right].
  \label{eq:fpr-closed-a0}
\end{align}
As $u\to0$, the sum of the two correction terms vanishes and $\Fbb[n]\to1$. As $u\to\infty$, its leading contribution is $\sqrt{D/\mu^3}\,u/6$. Since $\Ebb[n]=L/(3\mu)=\sqrt{D/\mu^3}\,u/3$ in this case, the Fano factor approaches the slow-fluctuation ceiling $1+\Ebb[n]/2$.

\begin{figure}[!tbp]
  \centering
  \tikzfigure{8cm}{%

}
  \caption{\textbf{Fano factor for drift--diffusion production with first-passage resetting.} Solid curves show the exact result from equations~\eqref{eq:fpr-fano-j} and~\eqref{eq:fpr-bvp} for $\alpha=-6$, $-1$, and $0$; the grey dashed line marks the Poisson value $\Fbb[n]=1$. The red dotted curve is the single-mode approximation~\eqref{eq:fpr-fano} for pure diffusion, $\alpha=0$. More negative $\alpha$ produces stronger upward drift and a smaller slow-fluctuation slope. Purple points show stochastic simulations at $\alpha=0$ and agree with the corresponding exact curve.}
  \label{fig:fpr-fano}
\end{figure}

Figure~\ref{fig:fpr-fano} uses $D=\mu=1$ and $\lambda_{\min}=0$. The solid curves are obtained by varying $L$ at fixed $\alpha\in\{-6,-1,0\}$. The simulation points use $\alpha=0$ and ten logarithmically spaced values of $L$ from $1.5$ to $30$. Simulation details are given in Appendix~\ref{app:simulation-details}.

\section{Conclusions}
\label{sec:conclusion}
We derived a general relation between fluctuations in the transcription rate $\lambda(t)$ and fluctuations in the mRNA copy number $n(t)$. Under mRNA production that is Poisson conditional on the complete transcription-rate trajectory $\lambda(\cdot)$, independent mRNA lifetimes, and a stationary transcription-rate process, equation~\eqref{eq:general-fano} expresses the Fano factor in terms of the lifetime survival probability, $\Ebb[\lambda]$, $\Vbb[\lambda]$, and $\rho(h)$.  For exponential degradation, equation~\eqref{eq:poisson-fano} reduces this result to an exponentially weighted integral of $\rho(h)$. The additive term $1$ is the Poisson copy-number variation that remains when the transcription rate is constant. The remaining term is the variation transmitted from fluctuations in $\lambda(t)$.

The result can be used in two ways. First, it can be applied to a specified model for $\lambda(t)$. We recovered the constitutive, random-static, telegraph, and cell-cycle-null cases in Section~\ref{sec:existing-models}. We then derived results for the discrete $\mathrm{M/M/1}$ process in Section~\ref{sec:mm1-queue} and for drift--diffusion with reflecting, periodic, and first-passage-reset boundaries in Section~\ref{sec:drift-diffusion}. In each model, the relative timescales of transcription-rate fluctuations and mRNA degradation determine how much rate variation reaches the copy number. Rapid rate fluctuations are averaged out, and the Fano factor approaches the Poisson value $1$. Slowly varying rates approach the slow-fluctuation ceiling. For first-passage resetting, the generator resolvent provides the exact result when a single-exponential approximation to the autocorrelation is insufficient. Starting from a given rate, this resolvent computes the expected future deviation from the stationary mean, discounted over the mRNA degradation timescale. Averaging that response over the stationary distribution gives the rate variation transmitted to the copy number. 

Second, equation~\eqref{eq:poisson-fano} can be used without specifying a parametric model for $\lambda(t)$.  If a measured rate trajectory is stationary and ergodic, its mean, variance, and autocorrelation can be estimated directly and substituted into equation~\eqref{eq:poisson-fano}. The synthetic example in Section~\ref{sec:data-driven-validation} recovered the exact Fano factor and had less sampling variation than the direct estimate obtained from the mRNA copy-number trajectory. This approach requires a known degradation rate $\mu$, but it does not require a parametric model or a closed-form expression for the autocorrelation.

The general result also permits non-exponential mRNA lifetimes by replacing the exponential survival probability with the appropriate function $\overline{\varphi}(s)$, provided the required integrals exist. The present analysis is limited to the mean and variance and assumes production that is Poisson conditional on $\lambda(\cdot)$ and independent lifetimes.  Extending it to higher moments or the full copy-number distribution would distinguish models that have the same $\Ebb[n]$ and $\Fbb[n]$ but different distributional shapes. A direct experimental test would estimate $\lambda(t)$ from live-cell measurements, predict the mRNA Fano factor, and compare it with matched copy-number data.

\begin{acknowledgments}
  This work was funded in part by the National Institutes of Health under Award No.~1R35GM162492-01.
  Work in the Golding lab is supported by the National Institutes of Health grant R35 GM140709, the National Science Foundation grant 2243257 (NSF Science and Technology Center for Quantitative Cell Biology), and the Alfred P. Sloan Foundation grant G-2023-19649.
\end{acknowledgments}

\appendix

\section{Derivations of the Doubly Stochastic Results}
\label{app:framework-proofs}

In brief, this appendix shows the following: Poisson thinning gives the conditional copy-number distribution; stationarity and the law of total variance give the mean and general Fano factor; exponential lifetimes reduce the double integral to a single weighted autocorrelation integral; and the Cauchy--Schwarz inequality gives the Poisson floor and slow-fluctuation ceiling. The final derivation in this appendix evaluates that weighted autocorrelation exactly for the $\mathrm{M/M/1}$ rate process.

\begin{proof}[Derivation of equations~\eqref{eq:newell} and~\eqref{eq:layered-conditional-poisson}]
\mbox{}\par\noindent
Consider the mRNAs produced approximately $s$ time units before the observation time $t$, within an interval of width $d s$. Conditional on $\lambda(\cdot)$, their expected number is $\lambda(t-s)\,d s$, and each remains present at time $t$ with probability $\overline{\varphi}(s)$. This group therefore contributes an expected $\overline{\varphi}(s)\lambda(t-s)\,d s$ molecules. Integrating over all molecular ages gives the conditional mean. For a Poisson number of produced mRNAs, independently retaining each molecule with its survival probability leaves a Poisson number of survivors. This property is known as Poisson thinning.  Moreover, production counts from nonoverlapping time intervals are independent, and independently thinning each interval preserves the independence of the resulting survivor counts. Their total is consequently Poisson with the stated mean.
\end{proof}

\begin{proof}[Derivation of equation~\eqref{eq:layered-stationary-mean}]
Taking the expectation of the conditional mean and moving the expectation through the integral gives
\begin{align}
  \Ebb[n]
  &= \int_0^\infty \overline{\varphi}(s)\,
  \Ebb[\lambda(t-s)]\,d s \notag\\
  &= \Ebb[\lambda]
  \int_0^\infty\overline{\varphi}(s)\,d s \notag\\
  &= \Ebb[\lambda]\Ebb[\tau]\,.
\end{align}
The second line follows from stationarity. For each realized value of the nonnegative lifetime $\tau$, its value is the length of the interval from $0$ to $\tau$:
\begin{equation*}
  \tau=\int_0^\tau d s.
\end{equation*}
Averaging over the lifetime distribution and exchanging the order of averaging and integration gives
\begin{equation*}
  \Ebb[\tau]
  =\int_0^\infty \Pbb(\tau>s)\,d s
  =\int_0^\infty\overline{\varphi}(s)\,d s.
\end{equation*}
\end{proof}

\begin{proof}[Derivation of equation~\eqref{eq:general-fano}]
The conditional Poisson distribution gives
\begin{equation}
  \Vbb\!\left[n\,\middle|\,\lambda(\cdot)\right]
  =
  \Ebb\!\left[n\,\middle|\,\lambda(\cdot)\right]\,.
  \label{eq:layered-conditional-poisson-variance}
\end{equation}
The law of total variance then gives
\begin{equation}
  \Vbb[n]
  = \Ebb[n]
  +\Vbb\!\left[
    \int_0^\infty \overline{\varphi}(s)\,
    \lambda(t-s)\,d s
  \right]\,.
  \label{eq:layered-total-variance}
\end{equation}
Expanding the remaining variance and using stationarity yields
\begin{align}
  &\Vbb\!\left[
    \int_0^\infty
    \overline{\varphi}(s)\,
    \lambda(t-s)\,d s
  \right] \notag\\
  &={}\int_0^\infty\!\!\int_0^\infty
  \overline{\varphi}(s_1)\overline{\varphi}(s_2)
  \Bigl[\Ebb[\lambda(s_1)\lambda(s_2)]
  -\Ebb[\lambda]^2\Bigr]
  \,d s_1\,d s_2 \notag\\
  &={}\Vbb[\lambda]\int_0^\infty\!\!\int_0^\infty
  \overline{\varphi}(s_1)\overline{\varphi}(s_2)
  \rho(|s_2-s_1|)
  \,d s_1\,d s_2\,.
  \label{eq:layered-conditional-mean-variance}
\end{align}
The normalized autocorrelation depends only on the time difference and is even for a real stationary process. Substitution into the total-variance expression and division by $\Ebb[n]$ give equation~\eqref{eq:general-fano}.
\end{proof}

\begin{proof}[Derivation of equation~\eqref{eq:poisson-fano}]
The symmetry of the general double integral allows its domain to be split along $s_1=s_2$. On either half, writing the difference between the two molecular ages as $h$ gives
\begin{align}
  &\int_0^\infty\!\!\int_0^\infty
  e^{-\mu(s_1+s_2)}
  \rho(|s_2-s_1|)
  \,d s_1\,d s_2 \notag\\
  &\qquad=
  2\int_0^\infty e^{-2\mu s_1}\,d s_1
  \int_0^\infty e^{-\mu h}\rho(h)\,d h \notag\\
  &\qquad=
  \frac{1}{\mu}
  \int_0^\infty e^{-\mu h}\rho(h)\,d h\,.
  \label{eq:layered-exponential-integral-reduction}
\end{align}
Substituting this reduction into equation~\eqref{eq:general-fano} gives equation~\eqref{eq:poisson-fano}.
\end{proof}

\begin{proof}[Derivation of equation~\eqref{eq:static-ceiling}]
The law of total variance gives $\Vbb[n]\geq\Ebb[n]$, which yields the lower bound. If $0<\Vbb[\lambda]<\infty$, the Cauchy--Schwarz inequality gives
\begin{align*}
  \left|\operatorname{Cov}[\lambda(t+h),\lambda(t)]\right|
  &\leq
  \sqrt{\Vbb[\lambda(t+h)]\Vbb[\lambda(t)]}
  =\Vbb[\lambda],
\end{align*}
\noindent where stationarity gives the final equality. Thus $|\rho(h)|\leq1$, and
\begin{equation}
  \int_0^\infty e^{-\mu h}\rho(h)\,d h
  \leq \int_0^\infty e^{-\mu h}\,d h
  = \frac{1}{\mu}\,.
\end{equation}
\noindent Combining this inequality with $\Ebb[n]=\Ebb[\lambda]/\mu$ gives the upper bound. If $\Vbb[\lambda]=0$, the transcription rate is constant with probability one, the rate-fluctuation contribution vanishes, and both bounds reduce to $\Fbb[n]=1$.
\end{proof}

\begin{proof}[Derivation of equation~\eqref{eq:mm1-fano}]
At stationarity, the M/M/1 state $m$ has a geometric distribution. Let us define
\begin{align}
  p_j
  &\mathrel{:=}
  \Pbb(m=j)
  =(1-r)r^j,
  \qquad j\geq0,
  \notag\\
  q
  &\mathrel{:=}
  \Ebb[m]
  =\frac{r}{1-r},
  \qquad
  \Vbb[m]
  =\frac{r}{(1-r)^2},
  \label{eq:mm1-appendix-a-stationary-statistics}
\end{align}
and also set
\begin{equation}
  \varepsilon
  \mathrel{:=}
  \frac{\mu}{m^{\downarrow}}
  =k(1-r).
  \label{eq:mm1-appendix-a-epsilon}
\end{equation}

\noindent To evaluate the exponentially weighted autocovariance in equation~\eqref{eq:poisson-fano}, we condition on the initial queue state $m(0)=j$.  For this initial state, $u_j$ accumulates the expected deviation of $m(h)$ from its stationary mean $q$, with the deviation at each future lag $h$ weighted by $e^{-\mu h}$. This weight is the probability that an mRNA survives for at least the additional time $h$: 
\begin{align}
  u_j
  \mathrel{:=}
  \int_0^\infty e^{-\mu h} \,
  \Ebb[m(h)-q\mid m(0)=j]\,d h,
  \notag\\
  v_j
  \mathrel{:=}
  m^{\downarrow}u_j.
  \label{eq:mm1-appendix-a-response}
\end{align}
\noindent
The backward equations of the birth--death process are
\begin{subequations}
\label{eq:mm1-appendix-a-recurrence}
\begin{align}
  (\varepsilon+r)v_0-rv_1
  &=-q,
  \label{eq:mm1-appendix-a-boundary}\\
  (1+r+\varepsilon)v_j-rv_{j+1}-v_{j-1}
  &=j-q,
  \qquad j\geq1,
  \label{eq:mm1-appendix-a-interior}
\end{align}
\end{subequations}
where the first line accounts for the absence of a downward transition at $m=0$. Defining the variables
\begin{equation}
  a
  \mathrel{:=}
  1+r+\varepsilon,
  \qquad
  z
  \mathrel{:=}
  \frac{2}{a+\sqrt{a^2-4r}},
  \label{eq:mm1-appendix-a-root}
\end{equation}
then $0<z<1$ and $rz^2-az+1=0$. Solving the recurrence and imposing its boundary equation gives
\begin{equation}
  v_j
  =
  \frac{j-q}{\varepsilon}
  -\frac{1-r}{\varepsilon^2}
  +\frac{z^{j+1}}{\varepsilon(1-z)}.
  \label{eq:mm1-appendix-a-response-solution}
\end{equation}
Direct substitution verifies both equations in~\eqref{eq:mm1-appendix-a-recurrence}.

The geometric distribution gives
\begin{equation}
  \Ebb[z^m]
  =\frac{1-r}{1-rz},
  \qquad
  \Ebb[(m-q)z^m]
  =-\frac{r(1-z)}{(1-rz)^2}.
  \label{eq:mm1-appendix-a-geometric-identities}
\end{equation}
Averaging the conditional response over the stationary initial state therefore yields
\begin{align}
  &\int_0^\infty e^{-\mu h}
  \bigl[\Ebb[m(h)m(0)]-q^2\bigr]\,d h
  \notag\\
  &\quad=
  \sum_{j=0}^\infty p_j(j-q)u_j
  \notag\\
  &\quad=
  \frac{1}{\mu}
  \left[
    \frac{r}{(1-r)^2}
    -\frac{rz}{(1-rz)^2}
  \right].
  \label{eq:mm1-appendix-a-covariance-transform}
\end{align}

Finally, $\lambda=\delta m$ and $\Ebb[n]=\delta q/\mu$. Equation~\eqref{eq:poisson-fano} consequently gives
\begin{align}
  \Fbb[n]-1
  &=
  \frac{\delta^2}{\mu\Ebb[n]}
  \int_0^\infty e^{-\mu h}
  \bigl[\Ebb[m(h)m(0)]-q^2\bigr]\,d h
  \notag\\
  &=
  \Ebb[n]
  \left[
    \frac{1}{r}
    -\frac{(1-r)^2z}{r(1-rz)^2}
  \right].
  \label{eq:mm1-appendix-a-fano}
\end{align}
Because $\varepsilon=k(1-r)$, equation~\eqref{eq:mm1-appendix-a-root} gives $a=1+r+k(1-r)$. This yields equation~\eqref{eq:mm1-fano}.
\end{proof}


\section{Details of Stochastic Simulations}
\label{app:simulation-details}
All simulations evolve the joint process $(\lambda(t),n(t))$. The discrete $\mathrm{M/M/1}$ model is sampled as an exact continuous-time Markov chain. The three drift--diffusion models use Euler--Maruyama steps for $\lambda(t)$ and Poisson time steps for mRNA production and degradation. After each diffusion step, the boundary rule of the corresponding model is applied to $\lambda(t)$. Stationary moments are computed as time averages after a burn-in period; ergodicity equates these long-time trajectory averages with stationary ensemble averages.

\subsection{\texorpdfstring{$\mathrm{M/M/1}$}{M/M/1} Production}
\label{app:simulation-mm1}
The simulation samples the four transitions in equation~\eqref{eq:mm1-joint-transitions}. At each event, the waiting time is exponentially distributed with a rate equal to the sum of the permitted transition rates, and the transition channel is selected in proportion to its rate. The downward transition of $m$ is omitted when $m=0$.

The initial queue state is drawn from its stationary geometric distribution, and the initial mRNA copy number is drawn conditionally as
\begin{align}
  \Pbb\bigl(m(0)=j\bigr)
  &=(1-r)r^j,
  \qquad j\geq0
  \notag\\
  n(0)\mid m(0)
  &\sim
  \operatorname{Poisson}\bigl(\delta m(0)/\mu\bigr).
\end{align}
After discarding $5\times10^4$ events, we record $N_{\mathrm{ev}}=2\times10^7$ events. If $f_j$ is the value of an observable during an inter-event interval of length $\Delta t_j$, its stationary mean is estimated by
\begin{equation}
  \widehat{\Ebb}[f]
  \mathrel{:=}
  \frac{\displaystyle\sum_{j=1}^{N_{\mathrm{ev}}}f_j\Delta t_j}
       {\displaystyle\sum_{j=1}^{N_{\mathrm{ev}}}\Delta t_j}.
\end{equation}
The simulations in Fig.~\ref{fig:mm1-model}(c) use $\mu=2$, $r=0.9$, and $k=1$, which give $m^{\downarrow}=20$ and $m^{\uparrow}=18$. We use $30$ logarithmically spaced values of $\delta$ to vary $\Ebb[n]$ from $0.1$ to $50$.

\subsection{Drift--Diffusion with a Reflecting Boundary}
\label{app:simulation-reflecting}
For the reflecting model, one Euler--Maruyama step followed by elastic reflection is
\begin{align}
  \lambda_{j+1}^{*}
  &=\lambda_j-v\Delta t+\sqrt{2D\Delta t}\,\eta_j,
  \qquad \eta_j\sim\mathcal{N}(0,1)
  \notag\\
  \lambda_{j+1}
  &=\bigl|\lambda_{j+1}^{*}\bigr|.
\end{align}
The mRNA copy number is then updated by
\begin{align}
  \nu_{+,j}
  &\sim\operatorname{Poisson}\bigl(\lambda_{j+1}\Delta t\bigr),
  \qquad
  \nu_{-,j}
  \sim\operatorname{Poisson}\bigl(\mu n_j\Delta t\bigr)
  \notag\\
  n_{j+1}
  &=\max\bigl(0,n_j+\nu_{+,j}-\nu_{-,j}\bigr).
\end{align}
The simulations in Fig.~\ref{fig:reflecting-model}(b) use $\mu=2$ and $k=2$. The main-text relations $\Ebb[n]=D/(\mu v)$ and $k=\mu D/v^2$ can be solved for $v$ and $D$ at fixed target $\Ebb[n]$ and $k$. They give
\begin{equation}
  v=\frac{\mu^2\Ebb[n]}{k},
  \qquad
  D=\frac{\mu^3\Ebb[n]^2}{k},
  \qquad
  \vartheta_p=\frac{D}{v^2}=\frac{k}{\mu}.
\end{equation}
\noindent
We use $50$ logarithmically spaced values of $\Ebb[n]$ from $0.1$ to $20$ and a time step $\Delta t=0.003k/\mu$. A burn-in of $5\max(\vartheta_p,1/\mu)$ is followed by a measurement interval of $2500\max(\vartheta_p,1/\mu)$.

\subsection{Drift--Diffusion with Periodic Boundary Conditions}
\label{app:simulation-periodic}
The Euler--Maruyama proposal is wrapped onto the interval $[0,L)$:
\begin{equation}
  \lambda_{j+1}
  =
  \bigl[
    \lambda_j-v\Delta t+\sqrt{2D\Delta t}\,\eta_j
  \bigr]\bmod L,
  \qquad
  \eta_j\sim\mathcal{N}(0,1).
\end{equation}
The copy number is updated by the same Poisson step used for the reflecting model. We initialize $\lambda(0)=L/2$, use $\Delta t=0.02$, discard a burn-in of $5\max(\vartheta_p,1/\mu)$, and measure for $6000$ time units.

The simulations in Fig.~\ref{fig:periodic-fano} use $\mu=1$ and $k=10$. We vary $L$ so that $\Ebb[n]=L/(2\mu)$ ranges from approximately $0.4$ to $40$, and set
\begin{equation}
  D=\frac{\mu L^2}{4\pi^2k}.
\end{equation}
Panel (a) uses $v=0$. Panel (b) uses $v=0.6\pi D$, for which $\vartheta_p\omega=0.6\Ebb[n]$ when $\mu=1$.

\subsection{Drift--Diffusion with First-Passage Resetting}
\label{app:simulation-fpr}
The Euler--Maruyama proposal is first reflected at $\lambda_{\min}$. If the reflected value reaches or exceeds $\lambda_{\max}=\lambda_{\min}+L$, it is reset to $\lambda_{\min}$. The copy number is then updated by the same Poisson step as above. We use
\begin{equation}
  \Delta t
  =
  \min\left(0.02,\frac{0.0008L^2}{D}\right).
\end{equation}
For the pure-diffusion simulations in Fig.~\ref{fig:fpr-fano}, the slowest-mode timescale is $\vartheta_\lambda=4L^2/(\pi^2D)$. We discard a burn-in of $5\max(\vartheta_\lambda,1/\mu)$ and record $2\times10^6$ time steps. The parameters are $D=\mu=1$, $\lambda_{\min}=0$, and $\alpha=0$. Ten logarithmically spaced values of $L$ from $1.5$ to $30$ vary $\Ebb[n]=L/3$ from $0.5$ to $10$.

\section{Derivation of Drift--Diffusion Production with Reflecting Boundary Condition}
\label{app:reflecting-derivation}
For $v>0$, equation~\eqref{eq:smoluchowski} with the reflecting boundary condition~\eqref{eq:reflecting-bc} has the stationary density
\begin{align}
  p_s(\lambda)
  &=\frac{v}{D}e^{-v\lambda/D},
  \qquad \lambda\geq0
  \notag\\
  \Ebb[\lambda]
  &=\frac{D}{v},
  \qquad
  \Vbb[\lambda]
  =\frac{D^2}{v^2}.
\end{align}
Let $C(h)$ denote the stationary autocovariance. For $h\geq0$, using the propagator~\eqref{eq:reflecting-propagator},
\begin{align}
  C(h)
  &=
  \int_0^\infty\!\!\int_0^\infty
  \lambda\lambda_0\,
  p(\lambda,h\mid\lambda_0,0)\,
  p_s(\lambda_0)\,
  d\lambda\,d\lambda_0
  -\Ebb[\lambda]^2
  \notag\\
  \rho(h)
  &=
  \frac{C(h)}{\Vbb[\lambda]}.
\end{align}
Defining $a\mathrel{:=} v^2/D$ and $k\mathrel{:=}\mu/a=\mu D/v^2$, and evaluating the integrals gives
\begin{align}
  \rho(h)
  ={}&
  \left(
    1-ah-\frac{a^2h^2}{4}
  \right)
  \operatorname{erfc}\left(\frac{\sqrt{ah}}{2}\right)
  \notag\\
  &+
  \sqrt{\frac{ah}{\pi}}
  \left(1+\frac{ah}{2}\right)
  e^{-ah/4}.
\end{align}
Its exponentially weighted integral is
\begin{equation}
  \frac{1}{\mu}
  \int_0^\infty e^{-\mu h}\rho(h)\,d h
  =
  \frac{
    2k^2-2k+\sqrt{1+4k}-1
  }{
    2k^2\mu^2
  }.
\end{equation}
Using $\Ebb[n]=D/(\mu v)$ in equation~\eqref{eq:poisson-fano} now gives
\begin{align}
  \Fbb[n]
  &=
  1+
  \frac{(D/v)^2}{\Ebb[n]}
  \left[
    \frac{1}{\mu}
    \int_0^\infty e^{-\mu h}\rho(h)\,d h
  \right]
  \notag\\
  &=
  1+\Ebb[n]
  \left[
    \frac{\sqrt{1+4k}-1}{2k^2}
    -\frac{1}{k}
    +1
  \right],
\end{align}
which yields equation~\eqref{eq:reflecting-fano}.

\section{Derivation of Drift--Diffusion Production with Periodic Boundary Conditions}
\label{app:periodic-derivation}
On the periodic interval $[0,L]$, the endpoints are identified and the process is defined on a ring. The Fourier mode $e^{2\pi i\ell\lambda/L}$ has eigenvalue $-a_\ell+i\omega_\ell$, where
\begin{equation}
  a_\ell
  \mathrel{:=}
  \frac{4\pi^2\ell^2D}{L^2},
  \qquad
  \omega_\ell
  \mathrel{:=}
  \frac{2\pi\ell v}{L}.
\end{equation}
Expanding the Dirac-delta initial condition in these modes gives
\begin{equation}
  p(\lambda,t\mid\lambda_0,0)
  =
  \frac{1}{L}
  \theta_3\left(
    \frac{\pi(\lambda-\lambda_0+vt)}{L},
    e^{-4\pi^2Dt/L^2}
  \right).
\end{equation}
Only the zero mode remains as $t\to\infty$, so the stationary density is $p_s(\lambda)=1/L$. Let us now define
\begin{equation}
  \iota_\ell
  \mathrel{:=}
  \int_0^L \lambda e^{2\pi i\ell\lambda/L}\,d\lambda
  =
  \begin{cases}
    L^2/2, & \ell=0,\\[2pt]
    L^2/(2\pi i\ell), & \ell\neq0.
  \end{cases}
\end{equation}
The stationary two-point function is therefore
\begin{equation}
  \Ebb[\lambda(t)\lambda(0)]
  =
  \frac{1}{L^2}
  \sum_{\ell=-\infty}^{\infty}
  e^{-a_\ell t}e^{i\omega_\ell t}
  \iota_\ell\iota_{-\ell}.
\end{equation}
The zero mode equals $\Ebb[\lambda]^2=L^2/4$. Combining the positive and negative modes and extending the result to all real lags gives
\begin{equation}
  C(h)
  =
  \frac{L^2}{2\pi^2}
  \sum_{\ell=1}^{\infty}
  \frac{e^{-a_\ell|h|}\cos(\omega_\ell h)}{\ell^2},
\end{equation}
which is equation~\eqref{eq:periodic-autocorr}. At $h=0$, the identity $\sum_{\ell=1}^{\infty}\ell^{-2}=\pi^2/6$ gives $\Vbb[\lambda]=L^2/12$.

For exponential degradation,
\begin{equation}
  \int_0^\infty
  e^{-\mu h}C(h)\,d h
  =
  \frac{L^2}{2\pi^2}
  \sum_{\ell=1}^{\infty}
  \frac{\mu+a_\ell}
       {\ell^2\left[(\mu+a_\ell)^2+\omega_\ell^2\right]}.
\end{equation}
Using $\Ebb[n]=L/(2\mu)$, $\vartheta_p=L^2/(4\pi^2D)$, $k=\mu\vartheta_p$, $a_\ell=\ell^2/\vartheta_p$, and $\omega_\ell=\ell\omega$ in equation~\eqref{eq:poisson-fano} gives
\begin{align}
  \Fbb[n]
  &=
  1+
  \frac{1}{\mu\Ebb[n]}
  \int_0^\infty e^{-\mu h}C(h)\,d h
  \notag\\
  &=
  1+\Ebb[n]\frac{2k}{\pi^2}
  \sum_{\ell=1}^{\infty}
  \frac{k+\ell^2}
  {\ell^2\left[
    (k+\ell^2)^2+\ell^2\vartheta_p^2\omega^2
  \right]},
  \label{eq:periodic-fano-derived}
\end{align}
which is equation~\eqref{eq:periodic-fano}. Keeping only the slowest mode and assigning it the full stationary variance gives the approximation~\eqref{eq:periodic-fano-single}. As $k\to0$, the additional term vanishes. As $k\to\infty$, the sum gives $\Ebb[n]/3$, the slow-fluctuation contribution of the uniform stationary distribution. Increasing $|\vartheta_p\omega|$ suppresses every mode through the oscillatory denominator.

\section{Derivation of Drift--Diffusion Production with First-Passage Resetting}
\label{app:fpr-derivation}
Let $L\mathrel{:=}\lambda_{\max}-\lambda_{\min}$. Within each cycle, the rate follows equation~\eqref{eq:smoluchowski}, reflects at $\lambda_{\min}$, and is absorbed at $\lambda_{\max}$. Absorbed trajectories are immediately reset to $\lambda_{\min}$. At stationarity, a constant probability current $j_s$ flows from the reset point to the absorbing boundary:
\begin{equation}
  -D\frac{d p_s}{d\lambda}-vp_s=j_s
\end{equation}
Imposing $p_s(\lambda_{\max})=0$ gives
\begin{equation}
  p_s(\lambda)
  =
  \frac{j_s}{v}
  \left[
    e^{v(\lambda_{\max}-\lambda)/D}-1
  \right]
\end{equation}
Defining
\begin{equation}
  \begin{aligned}
    \xi
    &\mathrel{:=}
    \frac{\lambda-\lambda_{\min}}{L}, \\
    \alpha
    &\mathrel{:=}
    \frac{vL}{D}, \\
    \psi(\alpha)
    &\mathrel{:=}
    \frac{e^\alpha-1}{\alpha}-1,
  \end{aligned}
\end{equation}
normalization then gives
\begin{equation}
  j_s
  =
  \frac{\alpha D}{L^2\psi(\alpha)},
  \qquad
  Lp_s(\lambda_{\min}+L\xi)
  =
  \frac{e^{\alpha(1-\xi)}-1}{\psi(\alpha)}.
\end{equation}
The second expression is the stationary density in equation~\eqref{eq:fpr-steady} written in the dimensionless coordinate $\xi$. The required integrals are
\begin{align}
  \int_0^1 e^{\alpha(1-\xi)}\,d\xi
  &=1+\psi(\alpha)
  \notag\\
  \int_0^1 \xi e^{\alpha(1-\xi)}\,d\xi
  &=\frac{\psi(\alpha)}{\alpha}
  \notag\\
  \int_0^1 \xi^2 e^{\alpha(1-\xi)}\,d\xi
  &=
  \frac{2\psi(\alpha)}{\alpha^2}
  -\frac{1}{\alpha}.
\end{align}
Consequently,
\begin{align}
  \Ebb[\xi]
  &=
  \frac{1}{\alpha}
  -\frac{1}{2\psi(\alpha)}
  \notag\\
  \Vbb[\xi]
  &=
  \frac{1}{\alpha^2}
  -\frac{1}{3\psi(\alpha)}
  -\frac{1}{4\psi(\alpha)^2},
\end{align}
which gives equation~\eqref{eq:fpr-moments}.

Now let $T(\lambda_0)$ denote the mean time required to reach $\lambda_{\max}$ from $\lambda_0$. It satisfies the backward problem
\begin{equation}
  \begin{aligned}
    DT''(\lambda_0)-vT'(\lambda_0)&=-1,\\
    T'(\lambda_{\min})&=0,\\
    T(\lambda_{\max})&=0
  \end{aligned}
\end{equation}
The process is absorbed at $\lambda_{\max}$, and reaching that boundary triggers an immediate reset to the reset point $\lambda_{\min}$. Evaluating the first-passage solution from $\lambda_{\min}$ therefore gives
\begin{equation}
  \Ebb[T_{\mathrm{cell}}]
  =
  T(\lambda_{\min})
  =
  \frac{L^2\psi(\alpha)}{\alpha D}
  =
  \frac{1}{j_s},
\end{equation}
\noindent
yielding equation~\eqref{eq:fpr-mfpt}. The result approaches $L^2/(2D)$ for pure diffusion and $L/|v|$ for strong upward drift.

For the single-cycle process without reinjection, the substitution $p(\lambda,t)=e^{-v\lambda/(2D)}\psi(\lambda,t)$ removes the first derivative from the Fokker--Planck operator and converts the spatial problem into a self-adjoint Sturm--Liouville eigenproblem. Applying the transformed absorbing and reflecting boundary conditions gives the eigenfunctions and decay rates
\begin{align}
  \phi_\zeta(\lambda)
  &\mathrel{:=}
  \sin\bigl[k_\zeta(\lambda_{\max}-\lambda)\bigr]
  \notag\\
  k_\zeta\cot(k_\zeta L)
  &=\frac{v}{2D}
  \notag\\
  E_\zeta
  &\mathrel{:=}
  Dk_\zeta^2+\frac{v^2}{4D}
\end{align}
\noindent
The absorbing condition at $\lambda_{\max}$ is automatic, and the reflecting condition at $\lambda_{\min}$ gives the equation for $k_\zeta$. For $\alpha<2$, the slowest wavenumber is real; it approaches $\pi/(2L)$ as $\alpha\to0$ and $\pi/L$ as $\alpha\to-\infty$.

If higher modes and correlations spanning reset events are neglected, the normalized autocorrelation is approximated by
\begin{equation}
  \rho(h)\approx e^{-E_0|h|},
  \qquad
  \vartheta_\lambda\mathrel{:=}\frac{1}{E_0},
  \qquad
  k_\lambda\mathrel{:=}\mu\vartheta_\lambda
\end{equation}
Substitution into equation~\eqref{eq:poisson-fano} gives
\begin{equation}
  \Fbb[n]
  \approx
  1+\Ebb[n]
  \frac{\Vbb[\lambda]}{\Ebb[\lambda]^2}
  \frac{k_\lambda}{k_\lambda+1}
\end{equation}
which is equation~\eqref{eq:fpr-fano}. Appendix~\ref{app:resolvent} retains all modes and correlations across resets.

\section{Exact Noise Transmission via the Generator Resolvent}
\label{app:resolvent}
The backward generator $\mathcal{L}$ describes how the conditional expectation of an observable $f$ changes when the initial rate is held fixed. Its evolution operator therefore satisfies
\begin{equation}
  \Ebb\bigl[f(\lambda(h))\mid\lambda(0)=\lambda\bigr]
  =
  \bigl(e^{h\mathcal{L}}f\bigr)(\lambda).
\end{equation}
\noindent Integrating this future conditional expectation against $e^{-\mu h}$ produces the resolvent $(\mu-\mathcal{L})^{-1}$. Thus the boundary-value problem yields the degradation-filtered covariance without first calculating the complete autocorrelation function. 
We define the centered rate $g(\lambda)\mathrel{:=}\lambda-\Ebb[\lambda]$ and the stationary expectation
\begin{equation}
  \Ebb_{p_s}[f]
  \mathrel{:=}
  \int_{\lambda_{\min}}^{\lambda_{\max}}
  f(\lambda)p_s(\lambda)\,d\lambda.
\end{equation}
The stationary autocovariance is then
\begin{equation}
  C(h)
  =
  \Ebb_{p_s}\left[
    g\,e^{h\mathcal{L}}g
  \right].
\end{equation}
Its exponentially weighted integral can be written as the resolvent of the generator:
\begin{align}
  J(\mu)
  &\mathrel{:=}
  \int_0^\infty e^{-\mu h}C(h)\,d h
  \notag\\
  &=
  \Ebb_{p_s}\left[
    g(\mu-\mathcal{L})^{-1}g
  \right].
\end{align}
Thus, if $\chi$ solves
\begin{equation}
  (\mu-\mathcal{L})\chi=g.
\end{equation}
then
\begin{equation}
  J(\mu)
  =
  \Ebb_{p_s}[g\chi].
\end{equation}

For the drift--diffusion process
\begin{equation}
  \mathcal{L}
  =
  -v\frac{d}{d\lambda}
  +D\frac{d^2}{d\lambda^2},
\end{equation}
the reflecting boundary at $\lambda_{\min}$ gives a Neumann condition for the backward problem. Reinjection at $\lambda_{\min}$ identifies the value of an observable at $\lambda_{\max}$ with its value immediately after reset. The resolvent problem is therefore
\begin{equation}
  \begin{aligned}
    D\chi''-v\chi'-\mu\chi
    &=
    -\bigl(\lambda-\Ebb[\lambda]\bigr),\\
    \chi'(\lambda_{\min})&=0,\\
    \chi(\lambda_{\max})&=\chi(\lambda_{\min}),
  \end{aligned}
\end{equation}
which is equation~\eqref{eq:fpr-bvp}.

Setting $y\mathrel{:=}\lambda-\lambda_{\min}\in[0,L]$, so that $\Ebb[y]=L\Ebb[\xi]$, and defining
\begin{equation}
  r_\pm
  \mathrel{:=}
  \frac{v\pm\sqrt{v^2+4D\mu}}{2D},
\end{equation}
the solution has the form
\begin{equation}
  \chi(y)
  =
  \frac{y-\Ebb[y]-v/\mu}{\mu}
  +c_+e^{r_+y}
  +c_-e^{r_-y}.
\end{equation}
The two boundary conditions determine $c_+$ and $c_-$ through
\begin{equation}
  \begin{pmatrix}
    r_+ & r_-\\[2pt]
    e^{r_+L}-1 & e^{r_-L}-1
  \end{pmatrix}
  \begin{pmatrix}
    c_+\\
    c_-
  \end{pmatrix}
  =
  \begin{pmatrix}
    -1/\mu\\[2pt]
    -L/\mu
  \end{pmatrix}.
  \label{eq:resolvent-coefficients}
\end{equation}
Finally, equation~\eqref{eq:fpr-fano-j} becomes
\begin{equation}
  \Fbb[n]
  =
  1+
  \frac{1}{\Ebb[\lambda]}
  \int_0^L
  \bigl(y-\Ebb[y]\bigr)
  \chi(y)
  p_s(\lambda_{\min}+y)\,d y.
\end{equation}
For $v=0$ and $\lambda_{\min}=0$, one has $r_\pm=\pm\sqrt{\mu/D}$, $\Ebb[y]=L/3$, and $p_s(y)=2(L-y)/L^2$. Substitution gives equation~\eqref{eq:fpr-closed-a0}. The exact resolvent retains the complete reset dynamics; the approximation~\eqref{eq:fpr-fano} instead replaces its autocorrelation by a single exponential.

\bibliography{references}

\end{document}